\newtheorem{thm}{Theorem}[section]
\theoremstyle{definition}
\theoremstyle{remark}
\newtheorem{obs}{Observation}[section]
\theoremstyle{plain}
\newtheorem{lem}[thm]{Lemma}
\begin{document}

\title{Faster Randomized Worst-Case Update Time for Dynamic Subgraph Connectivity}
\author{Ran Duan\\ Tsinghua University\\ \footnotesize\texttt{duanran@mail.tsinghua.edu.cn}
	\and Le Zhang\\ Tsinghua University\\
\footnotesize\texttt{le-zhang12@mails.tsinghua.edu.cn}}
\date{}

\maketitle

\begin{abstract}
	Real-world networks are prone to breakdowns. Typically in the underlying graph $G$, besides the insertion or deletion of edges, the set of active vertices changes overtime. A vertex might work actively, or it might fail, and gets isolated temporarily. The active vertices are grouped as a set $S$. $S$ is subjected to updates, i.e., a failed vertex restarts, or an active vertex fails, and gets deleted from $S$. Dynamic subgraph connectivity answers the queries on connectivity between any two active vertices in the subgraph of $G$ induced by $S$. The problem is solved by a dynamic data structure, which supports the updates and answers the connectivity queries. In the general undirected graph, the best results for it include $\widetilde{O}(m^{2/3})$ deterministic \emph{amortized} update time, $\widetilde{O}(m^{4/5})$ and $\widetilde{O}(\sqrt{mn})$ deterministic \emph{worst-case} update time. In the paper, we propose a randomized data structure, which has $\widetilde{O}(m^{3/4})$ worst-case update time.
\end{abstract} 

\section{Introduction}
Dynamic subgraph connectivity is defined as follows: Given an undirected graph $G=(V,E)$ with $|E|=\Omega(V)$, there is a subset $S\subseteq V$. $E$ is subjected to edge updates of the forms $insert(e,E)$ or $delete(e,E)$, where $e$ is an edge. Besides, given $v\in V$, there are vertex updates of the forms $insert(v,S)$ or $remove(v,S)$. Through vertex updates, $S$ changes overtime. We want to query whether any two vertices $s$ and $t$ are connected in the subgraph of $G$ induced by $S$, i.e.~a query of the form $connected(s,t,S)$ for $s,t\in S$.

The problem was first proposed by Frigioni and Italiano \cite{frigioni2000dynamically}, where it was referred to as \emph{complete dynamic model}, and poly-logarithmic algorithms on connectivity were described for the special case of planar graphs. As to the general graphs Chan \cite{chan2002dynamic} first described an algorithm of deterministic amortized update time $\widetilde{O}(m^{4\omega/(3\omega+3)})$\footnote{$\widetilde{O}(\cdot)$ hides poly-logarithmic factors.}, where $m$ is the number of edges, and $\omega$ is the matrix multiplication exponent. Adopting FMM (\emph{Fast Matrix Multiplication}) algorithm of \cite{Coppersmith1990251}, the update time is $O(m^{0.94})$. Its query time and space complexity are $\widetilde{O}(m^{1/3})$ and linear, respectively. Following it, Chan, P\v{a}tra\c{s}cu, and Roditty \cite{chan2011dynamic} proposed a simpler algorithm, having the improved update time of $\widetilde{O}(m^{2/3})$. The space complexity of the new algorithm increases to $\widetilde{O}(m^{4/3})$. The new algorithm is of compact description, getting rid of the use of FMM. With the same update time, Duan \cite{duan2010new} presented new data structures occupying linear space. The worst-case udpate time algorithm is first considered by Duan \cite{duan2010new}. The algorithm proposed has the deterministic worst-case update time of $\widetilde{O}(m^{4/5})$. Via an application of dynamic DFS tree \cite{baswana2016dynamic}, Baswana et al.~proposed a new algorithm with $\widetilde{O}(\sqrt{mn})$ deterministic worst-case update time. Its query time is $O(1)$. Recently it has an improvement \cite{DBLP:journals/corr/ChenDWZ16}. These results are summarized in Table \ref{table_results}.

A close related problem is dynamic graph connectivity, which cares only about the edge updates. Poly-logarithmic amortized update time was first achieved by Henzinger and King \cite{henzinger1999randomized}. The algorithm proposed is randomized Las Vegas. Inspired by it, Holm et al.~\cite{holm2001poly} proposed a deterministic algorithm with $O(\lg ^2 n)$\footnote{We use $\lg x$ to denote $\log_2 x$.} amortized update time, which is now one of the classic results in the field. A cell-probe lower bound of $\Omega(\lg n)$ per operation was proved by P\v{a}tra\c{s}cu and Demaine \cite{DynamicConnectivity_SICOMP}. The lower bound is amortized randomized. Near-optimal results were considered by Thorup \cite{thorup2000near}, where a randomized Las Vegas algorithm was described with $O(\lg n (\lg\lg n)^3)$ amortized update time. The upper bound is recently improved to $O(\lg n (\lg\lg n)^2)$ by Huang et al.~\cite{huang2016}. Besides the classic deterministic $O(\lg ^2 n)$ result, a faster deterministic algorithm was proposed by Wulff-Nilsen \cite{wulff2013faster}, of which the update time is $O(\lg ^2 n/\lg\lg n)$. Turning to the worst-case dynamic connectivity, a deterministic $O(\sqrt{n})$ update-time algorithm is Frederickson's $O(\sqrt{m})$ worst-case algorithm \cite{frederickson1985data} sped up via \emph{sparsification} technique proposed by Eppstein et al.~\cite{eppstein1997sparsification}. The result holds for online updating of minimum spanning trees. With roughly the same structure, but different and simpler techniques, Kejlberg-Rasmussen et al.~\cite{DBLP:journals/corr/Kejlberg-Rasmussen15} provided the so far best deterministic worst-case bound of $O(\sqrt{n(\lg\lg n)^2/\lg n})$ for dynamic connectivity. After the discovery of $O(\sqrt{n})$ update-time algorithm, people were wondering whether any poly-logarithmic worst-case update time algorithm is possible, even randomized. The open problem stands firmly for many years. A breakthrough should be attributed to Kapron et al.~\cite{kapron2013dynamic}. Their algorithm is Monte-Carlo, with poly-logarithmic worst-case update time. It has several improvements until now, as done in \cite{DBLP:journals/corr/GibbKKT15,DBLP:journals/corr/Wang15w}. For subgraph connectivity, the trivial update time of $\widetilde{O}(n)$ follows from Kapron et al.'s algorithm. The query time of it for subgraph connectivity can also be improved to $O(1)$, as the explicit maintenance of connected components can be done without blowing up the $\widetilde{O}(n)$ update time.

\subsection{Our Results}
The former $\widetilde{O}(m^{4/5})$ deterministic worst-case \emph{subgraph} connectivity structure adopted as a sub-routine the $O(\sqrt{n})$ deterministic worst-case algorithm for dynamic \emph{graph} connectivity. Now the randomized poly-logarithmic worst-case connectivity structures for dynamic \emph{graph} connectivity are discovered. We consider the question of whether it brings progress in \emph{subgraph} connectivity. The answer is affirmative. However, it does not come by replacing the $O(\sqrt{n})$ deterministic algorithm with the poly-logarithmic randomized one. Carefully tuning the former setting of the $\widetilde{O}(m^{4/5})$ algorithm is also in vain. Intuitively, the amortized $\widetilde{O}(m^{2/3})$ update time was achieved partially because it uses the connectivity structure of poly-logarithmic amortized update time. Now poly-logarithmic worst-case algorithms are discovered, it seems that the $\widetilde{O}(m^{2/3})$ worst-case update time is in sight. Nonetheless, we found that it is still hard to get the $\widetilde{O}(m^{2/3})$ update time. Until now we obtain the update time of $\widetilde{O}(m^{3/4})$. Although it only improves the previous deterministic $\widetilde{O}(m^{4/5})$ bound by $m^{0.05}$, note that the gap between the two deterministic results $\widetilde{O}(m^{2/3})$ and $\widetilde{O}(m^{4/5})$ is only $m^{0.134}$. The main contribution is a new organization of the auxiliary data structures.

The $\widetilde{O}(\sqrt{mn})$ result comes from dynamic DFS tree \cite{baswana2016dynamic,DBLP:journals/corr/ChenDWZ16}, which is a periodic rebuilding technique with fault tolerant DFS trees. Note that the $\widetilde{O}(m^{3/4})$ update time of our result is always no worse than $\widetilde{O}(\sqrt{mn})$ as $n=\Omega(m^{1/2})$.

\begin{table}[t]
	\centering
	\caption{Results on Dynamic Subgraph Connectivity}
	\label{table_results}
	\begin{tabular}{ccc}
		\hline
		Update time & Query time & Notes\\
		\hline\hline
		$\widetilde{O}(m^{4\omega/(3\omega+3)})$ & $\widetilde{O}(m^{1/3})$ & 
		\begin{tabular}{c}
			Amortized, \\deterministic, linear space \cite{chan2002dynamic}
		\end{tabular}\\
		\hline
		$\widetilde{O}(\sqrt{mn})$ & $O(1)$ &
		\begin{tabular}{c}
			Worst case, \\deterministic, space $\widetilde{O}(m)$ \cite{baswana2016dynamic,DBLP:journals/corr/ChenDWZ16}
		\end{tabular}\\
		\hline
		$\widetilde{O}(m^{2/3})$ & $\widetilde{O}(m^{1/3})$ & 
		\begin{tabular}{c}
			Amortized, \\ deterministic, space $\widetilde{O}(m^{4/3})$ \cite{chan2011dynamic}
		\end{tabular}\\
		\hline
		$\widetilde{O}(m^{2/3})$ & $\widetilde{O}(m^{1/3})$ & 
		\begin{tabular}{c}
			Amortized, \\ deterministic, linear space \cite{duan2010new}
		\end{tabular}\\
		\hline
		$\widetilde{O}(m^{4/5})$ & $\widetilde{O}(m^{1/5})$ & 
		\begin{tabular}{c}
			Worst case, \\deterministic, space $\widetilde{O}(m)$ \cite{duan2010new}
		\end{tabular}\\
		\hline
		$\widetilde{O}(m^{3/4})$ & $\widetilde{O}(m^{1/4})$ &
		\begin{tabular}{c}
			Worst case, \\randomized, linear space, this paper
		\end{tabular}
	\end{tabular}
\end{table}

Faster query time can be traded with slower update time for the bottom four results in Table \ref{table_results}. As to our result, $\widetilde{O}(m^{3/4+\epsilon})$ update time and $\widetilde{O}(m^{1/4-\epsilon})$ query time can be implemented. The details concerning these trade-offs are interpreted in Appendix \ref{sec_extend_epsilon}. Note that the trade-offs are in \emph{one direction}, i.e.~better query time with worse update time, but not vice-versa. Consequently, the former $\widetilde{O}(m^{4/5})$ algorithm never gives update time of $\widetilde{O}(m^{3/4})$. The trade-off phenomenon is definitely hard to break, as indicated by the OMv (\emph{Online Boolean Matrix-Vector Multiplication}) conjecture proposed by Henzinger et al.~\cite{henzinger2015unifying}. The OMv conjecture rules out \emph{polynomial} pre-processing time algorithms with the product of amortized update and query time being $o(m)$. Finally, our result is grouped as the following theorem.

\begin{thm}[Main Theorem]
	Given a graph $G=(V,E)$, there is a data structure for the dynamic subgraph connectivity, which has the worst-case vertex (edge) update time $\widetilde{O}(m^{3/4})$, query time $\widetilde{O}(m^{1/4})$, where $m$ is the number of edges in $G$, rather than in the subgraph of $G$ induced by $S$. The answer to each query is correct if the answer is \lq\lq yes\rq\rq, and is correct w.h.p.~if the answer is \lq\lq no.\rq\rq~ The pre-processing time is $\widetilde{O}(m^{5/4})$, and the space usage is linear.
	\label{thm_main}
\end{thm}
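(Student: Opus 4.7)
The plan is to couple the Monte-Carlo poly-logarithmic worst-case dynamic graph connectivity oracle of Kapron et al.~with the high-/low-degree partition that underlies the earlier amortized $\widetilde{O}(m^{2/3})$ algorithms, and to de-amortize the auxiliary layer by periodic background rebuilding. Fix a degree threshold $d = \Theta(m^{3/4})$; a vertex is \emph{high} if its $G$-degree exceeds $d$ and \emph{low} otherwise, so there are at most $O(m/d) = O(m^{1/4})$ high vertices. The primary structure $\mathcal{C}$ is a Kapron-et-al.\ oracle maintained on the subgraph induced by the currently active low vertices, so that any edge update or activation/deactivation of a low vertex touches at most $d$ edges of $\mathcal{C}$ at poly-log cost each, for a worst-case cost of $\widetilde{O}(d) = \widetilde{O}(m^{3/4})$.

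Queries are answered using a contracted graph $\mathcal{A}$ whose nodes are the connected components of $\mathcal{C}$ together with the active high vertices, and whose edges record, for each high vertex $h$, the components containing a graph-neighbor of $h$, together with $G$-edges between pairs of active high vertices. Since $\mathcal{A}$ has only $O(m^{1/4})$ high-vertex nodes, a BFS on $\mathcal{A}$ starting from the component of $s$ finishes in $\widetilde{O}(m^{1/4})$, and so does the entire query after locating $s$ and $t$ inside $\mathcal{C}$ in $\widetilde{O}(1)$.

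The main obstacle is keeping $\mathcal{A}$ up-to-date within the $\widetilde{O}(m^{3/4})$ budget: each $\mathcal{C}$-update can split or merge components and thereby rewire many of $\mathcal{A}$'s adjacencies, and activating a high vertex requires scanning its possibly large neighborhood. I would resolve this by splitting the timeline into phases of $T = \Theta(m^{1/2})$ operations, freezing $\mathcal{A}$ at the start of each phase, and carrying a side Kapron-et-al.\ oracle $\mathcal{C}'$ that records the $O(Td) = \widetilde{O}(m^{5/4})$ edges involved in the phase's operations. Queries are routed through the union of $\mathcal{A}$ and $\mathcal{C}'$; at each phase boundary the frozen $\mathcal{A}$ is replaced by a background rebuild whose $\widetilde{O}(m)$ cost is spread evenly over $T$ operations, contributing only $\widetilde{O}(m/T) = \widetilde{O}(m^{1/2})$ per update, well within budget. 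Organizing $\mathcal{A}$, $\mathcal{C}$, and $\mathcal{C}'$ so that queries remain both correct and $\widetilde{O}(m^{1/4})$-time \emph{during} a rebuild is the ``new organization of auxiliary data structures'' flagged in the introduction, and is the delicate point I expect to be hardest.

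Correctness of ``yes'' answers is deterministic, as any witnessed path is genuinely present in the induced subgraph; ``no'' answers inherit their Monte-Carlo guarantee from Kapron et al., and a union bound over $\mathrm{poly}(m)$ operations with fresh hash seeds at each rebuild delivers w.h.p.~correctness. Pre-processing is dominated by the initial construction of $\mathcal{A}$ and $\mathcal{C}'$ on the starting graph, taking $\widetilde{O}(m^{5/4})$; linear space follows because all structures store only adjacency and hash data of total size $O(m)$.
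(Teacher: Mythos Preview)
Your plan diverges from the paper and, more importantly, has real gaps at exactly the points you flag as ``delicate.'' First, the query bound is wrong as stated: the contracted graph $\mathcal{A}$ has only $O(m^{1/4})$ high-vertex nodes, but a single high vertex can be adjacent to $\Omega(m^{3/4})$ distinct $\mathcal{C}$-components, so a BFS from the component of $s$ may visit $\Omega(m^{3/4})$ nodes of $\mathcal{A}$, not $\widetilde{O}(m^{1/4})$. Second, the phase argument does not cover high-vertex updates: your ``$O(Td)$ edges per phase'' count assumes each operation touches at most $d=m^{3/4}$ edges, but activating a high vertex of degree $\Theta(m)$ contributes $\Theta(m)$ edges to $\mathcal{C}'$ in one step, so neither the $\widetilde{O}(m^{3/4})$ per-operation budget nor linear space survives (a Kapron--King--Mountjoy oracle on $\Theta(m^{5/4})$ edges already uses $\Theta(m^{5/4})$ space). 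Third, you explicitly leave open how to answer queries correctly while $\mathcal{A}$ is stale and being rebuilt; that is not a detail but the crux of the problem, and the paper does \emph{not} solve it by periodic rebuilding.

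What the paper actually does is a \emph{three}-level degree partition with thresholds $m^{1/4}$ and $m^{1/2}$: the bottom layer $A$ (degree $\le m^{1/4}$) carries a deterministic $O(\sqrt{m})$ connectivity structure; auxiliary ``path graphs'' and ``adjacency structures'' summarize, for each $A$-component, its $B$- and $C$-neighborhoods; and the Kapron--King--Mountjoy oracle is run on a graph $G^{*}$ whose vertex set is $V_B\cup V_C$ together with one meta-vertex per large $A$-component, so $|V(G^{*})|=O(m^{3/4})$. A $C$-vertex update then costs $\widetilde{O}(m^{3/4})$ simply because $G^{*}$ has that many vertices; $A$- and $B$-vertex updates are handled by the path-graph and adjacency-structure lemmas, each of which absorbs $O(m^{1/4})$ spanning-forest links/cuts at $\widetilde{O}(m^{1/2})$ apiece. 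There is no phase-based freezing of the auxiliary layer; the only background rebuilding in the paper is in the appendix, and it is used solely to keep the parameter $m$ roughly current under edge insertions/deletions.
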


\section{Preliminaries}\label{sec_pre}
We review the two results on dynamic graph connectivity, which are adopted as sub-routines in our data structure. The first one is deterministic, and the second one is randomized. Given $G=(V,E)$ with $n$ vertices and $m$ edges, the properties of the two data structures are described in the following two theorems.

\begin{thm}[\cite{DBLP:journals/corr/Kejlberg-Rasmussen15}]\label{thm_deterministic}
	A spanning forest $F$ of $G$ can be maintained by a deterministic data structure of linear space, with $O(\sqrt{{m(\lg\lg n)^2}/{\lg n}})$ worst-case update time for an edge update in $G$, and constant query time to determine whether two vertices are connected in $G$.
\end{thm}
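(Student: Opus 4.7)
The plan is to combine Frederickson's worst-case clustered spanning-forest algorithm with sparsification and word-RAM bit-packing. First I would maintain a spanning forest $F$ via a two-level decomposition: partition the edges of $G$ into connected clusters of size at most $z$ with $O(1)$ boundary, storing for each cluster the induced sub-forest of $F$ together with the list of non-tree edges attached to it. An insertion either links two trees of $F$ or is deposited into a single cluster's non-tree list, so the only delicate case is a tree-edge deletion: after cutting $e=(u,v)$, pick the smaller of the two resulting subtrees and sweep its clusters looking for a non-tree edge that crosses back to the other side. Frederickson's accounting bounds this sweep by $O(m/z+z)$ cluster and intra-cluster operations, which already yields $O(\sqrt{m})$ worst-case update time. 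Queries are then answered in $O(1)$ by maintaining canonical component labels at the top of the structure via an Euler-tour representation; a thin layer of Eppstein et al.'s sparsification on top keeps the bound clean when $m \gg n$.

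The main obstacle, and the whole content of Kejlberg-Rasmussen et al.'s contribution, is shaving the final $\sqrt{\lg n/(\lg\lg n)^2}$ factor. The plan is to shrink the cluster size further and pack each cluster's adjacency, its induced sub-forest, and its non-tree edge bitmap into $O(1)$ machine words of $\Theta(\lg n)$ bits, so that standard word-RAM primitives (rank/select, masked comparisons, precomputed $o(\sqrt{n})$-size lookup tables keyed by $\tfrac12 \lg n$-bit fingerprints) implement every per-cluster probe and every local re-clustering in $O(1)$ word operations. In effect, batches of $w=\Theta(\lg n/(\lg\lg n)^2)$ per-cluster probes or intra-cluster edge manipulations collapse into a single word operation, so the replacement search costs $O((m/z+z)/w)$ time. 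Balancing with $z=\Theta\bigl(\sqrt{m\,\lg n/(\lg\lg n)^2}\bigr)$ then yields the claimed $O\bigl(\sqrt{m(\lg\lg n)^2/\lg n}\bigr)$ worst-case bound.

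The hardest sub-step I expect is that a deletion may trigger a cascade of re-clustering: merging or splitting two small clusters must preserve the packed invariants (boundary bound, induced-forest correctness, Euler-tour labels) in $O(1)$ word operations, which is where most of the algorithmic care goes. Finally, I would verify that the bit-packed representation keeps total space linear: each cluster descriptor uses $O(z)$ words, and with $O(m/z)$ clusters the storage is $O(m)$ plus $O(n)$ pointers for the top-level Euler-tour structure, matching the linear space claim.
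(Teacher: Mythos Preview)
The paper does not prove this theorem at all: it is quoted verbatim from Kejlberg-Rasmussen et al.\ and used as a black box (the deterministic connectivity structure on $G_A$). There is therefore no ``paper's own proof'' to compare your proposal against; any reconstruction you give is necessarily a sketch of the cited reference, not of anything in this paper.

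As an outline of the cited result your sketch is in the right spirit---Frederickson-style clustering plus word-RAM bit-packing to speed up the per-cluster work---but one arithmetic detail is inconsistent. If the replacement search truly cost $O((m/z+z)/w)$ as you write, the optimal choice would be $z=\Theta(\sqrt{m})$, not $z=\Theta(\sqrt{mw})$, and the bound would come out to $O(\sqrt{m}/w)$ rather than $O(\sqrt{m/w})$. The stated bound $O(\sqrt{m(\lg\lg n)^2/\lg n})=O(\sqrt{m/w})$ arises when the packing speeds up only one side of the trade-off, e.g.\ cost $O(m/z + z/w)$ or $O(m/(zw)+z)$, balanced at $z=\Theta(\sqrt{mw})$ or $z=\Theta(\sqrt{m/w})$ respectively. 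So your cost expression and your choice of $z$ do not match each other, even though the final bound you land on is the intended one. For the purposes of the present paper none of this matters: only the statement of the theorem is used.
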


\begin{thm}[\cite{kapron2013dynamic}]
	There is a randomized data structure on dynamic graph connectivity, which supports the worst-cast time $O(\lg^4 n)$ per edge insertion, $O(\lg^5 n)$ per edge deletion, and $O(\lg n/\lg\lg n)$ per query. For any constant $c$ the answer to each query is correct if the answer is \lq\lq yes\rq\rq~and is correct with probability $\ge 1-{1}/{n^c}$ if the answer is \lq\lq no.\rq\rq~The pre-processing time of it is $O(m\lg^3 n+n\lg^4 n)$.
	\label{thm_randomized}
\end{thm}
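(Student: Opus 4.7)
The plan is to maintain a dynamic spanning forest $F$ of $G$ so that the query ``are $u,v$ connected?'' reduces to a same-tree test in $F$, then handle the one nontrivial case---a tree-edge deletion---by using randomized linear sketches to sample a replacement edge in polylogarithmic time. I would store each tree of $F$ in an Euler-tour tree (ETT), a balanced-BST representation that supports link, cut, and subtree-aggregate queries in $O(\log n/\log\log n)$ time when internal nodes are given $\Theta(\log n)$ children.

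Each ETT node is augmented with XOR sketches: every edge $e$ is assigned a random $O(\log n)$-bit identifier, and every vertex $u$ stores the XOR of the identifiers of its incident edges. Because every \emph{interior} edge of a subtree contributes to two endpoint sketches, its identifier cancels when the sketches are XORed over the subtree, leaving exactly the XOR of the identifiers of the \emph{crossing} edges of the cut induced by that subtree. When exactly one edge crosses, the XOR names it directly. To handle cuts with many crossing edges, I would keep $O(\log n)$ parallel sketches each restricted to a random subsample of $E$ at geometric rates $2^{-0},2^{-1},\dots,2^{-\log n}$; at least one rate leaves $\Theta(1)$ crossing edges in expectation and, with constant probability, exactly one, which the sketch reveals. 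Replicating $\Theta(\log n)$ independent copies boosts single-deletion success to $1-1/n^{c'}$ for any desired $c'$.

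I would then stack $O(\log n)$ levels of this primitive in a Holm-Lichtenberg-Thorup-like hierarchy: every non-forest edge carries a random level and is promoted when chosen as a replacement, and each level maintains its own ETTs and sketches. An edge insertion touches the endpoint sketches at every level and, if the endpoints are in different components at the bottom level, links the two ETTs there, for $O(\log^4 n)$ total time; a non-forest deletion merely subtracts from sketches at every level; a forest-edge deletion scans levels top-to-bottom, sampling a replacement at each via the sketch trick in $O(\log^4 n)$ time, for $O(\log^5 n)$ overall. Queries compare ETT-component identifiers in $O(\log n/\log\log n)$. Preprocessing inserts the $m$ initial edges one by one, costing $O(m\log^3 n+n\log^4 n)$.

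The main obstacle is the correctness analysis: a ``disconnected'' answer can be wrongly returned if every sketch attempt at every level fails to find a genuinely existing replacement edge, and the sketch randomness must be safely reused across a polynomial number of updates and queries. I would control this by a union bound over the polynomially many sketch attempts during the data structure's lifetime, choosing the sketch-signature length so that each individual failure probability is small enough to absorb the union bound and leave total error at most $1/n^c$. The ``yes'' direction is deterministic because reporting ``connected'' is always witnessed by an explicit path inside $F$, so the one-sided Monte-Carlo guarantee of the theorem falls out.
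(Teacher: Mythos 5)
This statement is not proved in the paper at all: Theorem \ref{thm_randomized} is quoted as a black box from Kapron, King and Mountjoy \cite{kapron2013dynamic}, so there is no in-paper proof to compare against. Your sketch is essentially a reconstruction of that cited construction (XOR cutset sketches over vertex incidence lists, geometric subsampling rates, $\Theta(\lg n)$ independent copies, Euler-tour trees with high branching for the $O(\lg n/\lg\lg n)$ query), so you are on the right track, but two points in your write-up are genuine gaps rather than omitted routine detail.

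First, you never verify the value recovered from a sketch. When more than one edge crosses the cut, the XOR of several identifiers can masquerade as a legitimate edge name; if such a spurious name is linked into the forest, the witness $F$ is no longer a subgraph of $G$, and your claim that a \lq\lq yes\rq\rq\ answer is always witnessed by a real path collapses --- the one-sided error guarantee of the theorem depends on it. The cited construction explicitly checks each candidate against the current edge set (and checks that it crosses the cut) before using it; you need that step, not just the failure-probability bookkeeping. Second, your probability argument is too quick: the cut whose sketch you examine after a tree-edge deletion is determined by the entire history of the data structure, which itself depends on the random identifiers and sampling levels, so the crossing-edge set is not independent of the randomness in the sketch you are querying. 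A bare union bound over \lq\lq polynomially many sketch attempts\rq\rq\ does not by itself justify the \lq\lq constant success probability per attempt\rq\rq\ claim; the actual analysis maintains the invariant that after every update the witness forest is a spanning forest w.h.p.\ (with fresh randomness attached to edges at insertion time) and unions over updates, which is exactly the property this paper later leans on in Lemma \ref{lem_correctness}. Relatedly, your HLT-style \lq\lq promote on becoming a tree edge\rq\rq\ hierarchy is an amortized-flavoured mechanism grafted onto a worst-case structure; the cited scheme instead uses a fixed tiered/nested-forest organization precisely so that no bulk level migrations are ever needed, and if you keep promotions you must argue both that each operation promotes only $O(1)$ edges and that promotions do not bias the sampling rates the success probability relies on. Minor: with branching factor $\Theta(\lg n)$ the ETT supports queries in $O(\lg n/\lg\lg n)$, but link/cut cost $\Theta(\lg^2 n/\lg\lg n)$, which is fine for the stated update budgets but not the bound you wrote.
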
 

For dynamic \emph{subgraph} connectivity, we first consider the case when the updates are only vertex updates. The extension to edge updates is deferred to Appendix \ref{sec_edge-updates}. Hence temporarily $G$ is assumed to be static, as $E$ does not change if there are no edge updates. The vertex updates change $S$. Initially, $G$ is slightly modified to keep $m=\Omega(n)$ during its lifetime, i.e., for every $v\in V$, insert a new vertex $v'$ and a new edge $(v,v')$. The variant graph has $m=\Omega(n)$, which is required in the definition of the dynamic subgraph connectivity \cite{chan2002dynamic}. It facilitates the presentation of time and space complexity as functions of $m$, rather than of $n$ in the case of degenerate
graphs.

\section{The Data Structure}\label{sec_alg}

We give some high-level ideas. Main difficulties are the update of $S$ (recall that $S$ is the set of active vertices) incurred by the high-degree vertices, as their degrees are too high to explicitly delete their incident edges one by one. Nonetheless, if the low-degree vertices had been removed, the graph became smaller, and consequently former high-degree vertices were not high-degree anymore. Hence our aim is to remove the low-degree vertices. After that, some artificial edges are added to restore the loss of connectivity due to the removal of the low-degree vertices. Next a dynamic connectivity data structure is maintained on the modified graph, i.e., the graph with the low-degree vertices removed, and the artificial edges added. Besides, as $S$ evolves dynamically, we need to update the artificial edges accordingly. Hence the \emph{point} is how to maintain these artificial edges consistently and efficiently. We now move to the details. We partition $V$ according to their degrees in $G$. Use $\text{deg}_G(v)$ to denote the degree of $v$ in $G$.

\begin{itemize}
\item $C:$ Vertices with $\text{deg}_G(v)>{m}^{1/2}$
\item $B$: Vertices with ${m}^{1/4}<\text{deg}_G(v)\le m^{1/2}$
\item $A$: Vertices with $\text{deg}_G(v)\le m^{1/4}$
\end{itemize}

Denote $C\cap S$, $B\cap S$, and $A\cap S$ as $V_C$, $V_B$, and $V_A$ respectively. Consider the subgraph $G_A$ of $G$ induced by $V_A$. Define the \emph{degree} of a component as the sum of $\text{deg}_G(v)$'s for $v$'s in it. According to the degrees of the components, partition the components of $G_A$ into two types: \emph{high} component, with its degree $>m^{1/4}$; \emph{low} component, with its degree $\le m^{1/4}$. A spanning forest $F_A$ of $G_A$ is maintained by  the deterministic connectivity structure of Theorem \ref{thm_deterministic}.

\subsection{Path Graph}
A path graph inserts some artificial edges to reflect the \lq\lq are connected\rq\rq~relation of the vertices within $V_B$ via directly linking with a component of $G_A$. The idea was proposed in \cite{duan2010new}, whereas here we describe it more rigorously, giving more details. W.l.o.g.~assume $V=\{0,\dotsc,n-1\}$. Consider a spanning tree $T$ of $F_A$.

\begin{itemize}
	\item \emph{subpath tree}: For $v\in T$, identify the set of vertices in $V_B$ that are adjacent to $v$. Store the set of vertices in a balanced search tree, which has the worst-case $O(\lg n)$ update time for the well-known search-tree operations \cite{Cormen:2009:IAT:1614191}. Name the search tree as the subpath tree of $v$. Given the subpath tree of $v$, a sequence of artificial edges is added to link the vertices stored in the subpath tree of $v$. The sequence of artificial edges constitutes a subpath. 
	
	\item \emph{path tree}: Given $T\in F_A$, group all $v\in T$ with the \emph{non-empty} subpath tree as a balanced search tree, ordered by the Euler-tour order of $T$ (See Appendix \ref{sec_notation}.). Name it as the path tree of $T$. As each vertex stored in the path tree of $T$ has an associated subpath, these subgraphs are also concatenated one by one via the artificial edges, generating a path. To emphasize its difference from an ordinary path, it is referred to as the \emph{path graph} of $V_B$ w.r.t.~$T$. An example is shown in Fig.~\ref{fig_path-graph}.
\end{itemize}

\begin{figure}[t]
	\includegraphics[scale=0.75]{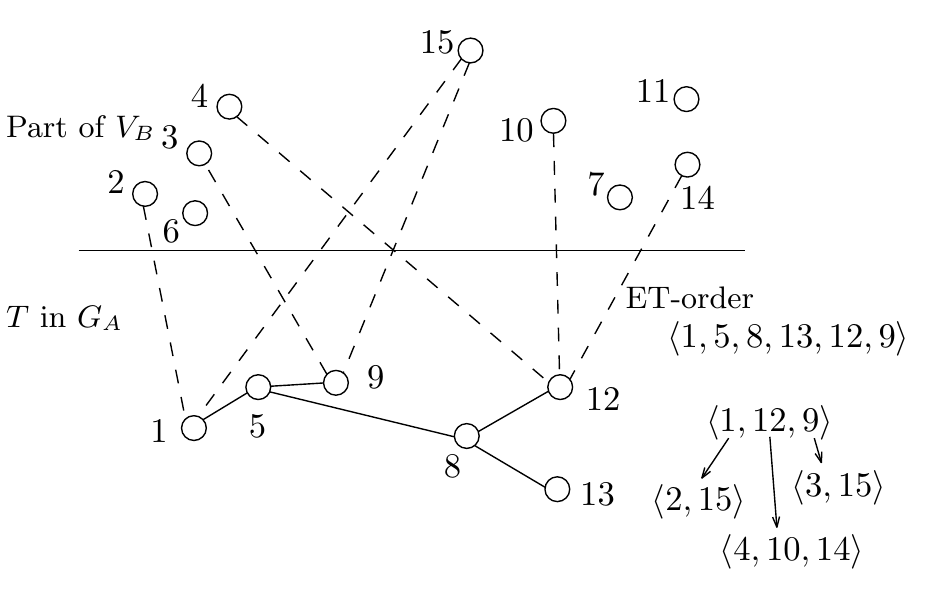}
	\centering
	\caption{The path graph of $V_B$ w.r.t.~a spanning tree $T$ in $F_A$. The \emph{dashed} edges represent edges between $V_B$ and $V_A$. The path tree is on sequence $\langle 1, 12, 9\rangle$, and three subpath trees are on sequences $\langle 2, 15\rangle$, $\langle 4, 10, 14\rangle$, and $\langle 3, 15\rangle$ respectively. The resulted path graph is a path $\langle 2,15,4,10,14,3,15\rangle$.}
	\label{fig_path-graph}
\end{figure}

\begin{lem}
	The path graphs can be updated in $\widetilde{O}(m^{1/2})$ time for a vertex update in $V_B$, and in $\widetilde{O}(1)$ time for a link or cut on $F_A$.
	\label{lem_path-graph}
\end{lem}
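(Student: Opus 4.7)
The plan is to realize the subpath trees as ordinary balanced binary search trees (exactly as already declared in the lemma environment above), and the path tree of each $T\in F_A$ as an Euler-tour tree of $T$ augmented with a bit marking which vertices carry a non-empty subpath tree. Every primitive operation on either structure costs $O(\lg n)$, and both claimed bounds then follow by counting primitives.

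For a vertex update on $w$ entering or leaving $V_B$, I exploit $\text{deg}_G(w)\le m^{1/2}$. I iterate over the neighbors of $w$, and for each $V_A$-neighbor $u$ I insert or delete $w$ in the subpath tree of $u$ in $O(\lg n)$ time. Only when this flips the empty/non-empty status of $u$'s subpath tree do I additionally insert or delete $u$ in the path tree of the component of $F_A$ that contains $u$, at one further $O(\lg n)$ cost. Summed over the at most $m^{1/2}$ relevant neighbors this gives $\widetilde{O}(m^{1/2})$. The artificial edges themselves are kept implicit: whenever a caller actually needs the predecessor or successor of a vertex along the path graph, a two-level navigation (within a subpath tree, or across subpath trees via the path tree) recovers it in $O(\lg n)$ time.

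For a link or cut on $F_A$, the Euler-tour tree of the affected tree(s) is updated in $O(\lg n)$, and because the path tree is maintained as the augmented Euler-tour tree projected onto its marked nodes, the concatenation or split of path trees is performed automatically by the same $O(\lg n)$-time link or cut. Since the subpath trees are keyed by $V_B$-vertex identity and therefore do not depend on the tree topology of $F_A$, the link or cut leaves them untouched. The total cost of handling one link or cut is $O(\lg n)=\widetilde{O}(1)$.

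The step I expect to be the most delicate is coordinating the two layers under a $V_B$-update: flipping a subpath tree between empty and non-empty must trigger exactly one path-tree update, and that update must be routed to the path tree of the right component of $F_A$. Once the augmentation bit and suitable subtree summaries are wired into the Euler-tour tree above, this routing becomes a constant number of pointer accesses followed by a single $O(\lg n)$-time insertion or deletion, and the remaining analysis reduces to the counting of primitive tree operations sketched above.
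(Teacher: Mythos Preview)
Your argument is correct and follows the same two-case breakdown as the paper: for a $V_B$ update you charge $\widetilde{O}(1)$ to each of the at most $m^{1/2}$ neighbours in $V_A$, and for a link or cut you split/merge the path tree with $O(1)$ boundary edges changing. The difference is purely in the realization of the path tree. You embed it directly in the Euler-tour tree of $T$ via a mark bit, so that the ET-tree link/cut automatically splits or concatenates the path tree; the paper instead keeps the path tree as a \emph{separate} balanced BST ordered by Euler-tour position, and on a cut it first locates the boundary nodes $a,b$ of the detached piece, then uses the auxiliary order tree of Appendix~\ref{sec_notation} to find the predecessor of $a$ and successor of $b$ inside the path tree in $O(\lg^2 n)$ time before splitting. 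Your variant saves a logarithmic factor and one auxiliary structure.

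One point you should make explicit: the paper does not keep the path-graph edges implicit. They are materialized as entries $D[u,v]$ and fed into $G^*$, so the lemma is really being used downstream (Lemma~\ref{lem_update_query_time}) to report the $O(1)$ changed artificial edges per link/cut and the $O(1)$ changed edges per processed neighbour in a $V_B$ update. Your two-level predecessor/successor navigation does recover exactly these edges in $O(\lg n)$ apiece, so nothing breaks, but as written your sentence ``the artificial edges themselves are kept implicit'' understates what the caller needs; you should state that the changed edges are also enumerated within the same time bounds.
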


\begin{proof}
We categorize the analysis into two cases. 
\begin{itemize}
\item Reflect a vertex update in $V_B$: Suppose $v\in V_B$ is removed from $S$. The case of insertion is similar. $v$ has $\le m^{1/2}$ edges adjacent to $F_A$. Consider $(v,w)$ with $w\in T$. We locate $w$ in the path tree of $T$. Now the subpath associated with $w$ is known. Update the subpath of $w$ by removing $v$ from the subpath. If $v$ happens to be the first or the last vertex on the subpath, the path graph of $T$ is also updated. As the subpaths and the path graph are concerned with the nodes stored in the subpath trees and the path tree respectively, which are all balanced search trees, the removal of $(v,w)$ needs $\widetilde{O}(1)$ time. The removal of all such $(v,w)$'s requires $\widetilde{O}(m^{1/2})$ time.

\item Reflect a link or cut on $F_A$: We only discuss the edge cut on $F_A$. The edge link is similar. Assume the edge cut is $(v,w)\in T$, and the Euler tour of $T$ is $\langle L_1, (v,w), L_2, (w,v), L_3\rangle$ (The details on Euler tours can be found in Appendix \ref{sec_notation}.). After the cut of $(v,w)$, the Euler tours for the two resulted trees are $\langle L_1, L_3\rangle$ and $\langle L_2\rangle$. We can determine the first vertex $a$ and the last vertex $b$ of $\langle L_2\rangle$. With the order tree of $T$ presented in Appendix \ref{sec_notation}, the predecessor of $a$ and the successor of $b$ in the path tree of $T$ can be found in $O(\lg^2 n)$ time. With the predecessor and the successor, the path tree of $T$ is split. After the split, $O(1)$ edges in the path graph are removed to reflect the split of the path tree of $T$. As a conclusion, the path graph can be updated in $\widetilde{O}(1)$ time to reflect a link or cut on $F_A$.
\end{itemize}

\end{proof}

\subsection{Adjacency Structure}

Given $T\in F_A$ and $v\in C$, we want a data structure that provides the fast query of whether $T$ and $v$ are adjacent, i.e., whether an edge $(u,v)$ exists with $u\in T$. The adjacency structure presented in \cite{duan2010new} can answer such queries. Here we describe it more rigorously, giving more details. Assuming $v\in C$, the adjacency structure of $v$ contains the following search trees.

\begin{itemize}
\item \emph{sub-adjacency tree}: Given $T\in F_A$, identify the set of vertices in $T$ that are adjacent to $v$. Store the set of vertices as a balanced search tree, ordered by the Euler-tour order of $T$ (See Appendix \ref{sec_notation}.). Name the balanced search tree as the sub-adjacency tree of $v$ w.r.t.~$T$.

\item \emph{adjacency tree}: Identify $T\in F_A$ by the smallest vertex in $T$. Group all $T\in F_A$, w.r.t.~which $v$ has non-empty sub-adjacency trees, as a balanced search tree. Name the balanced search tree as the adjacency tree of $v$.
\end{itemize}

The sub-adjacency trees and the adjacency tree of $v$ constitute the adjacency structure of $v$ w.r.t.~$F_A$. The query aforementioned is answered by checking whether $T$ is in the adjacency tree of $v$. Note $v\in C$, rather than $\in V_C$. The adjacency structure of $v\in C$ w.r.t.~$F_A$ is maintained even if $v\notin S$.

\begin{lem}\label{lem_adjacency-structure}
	The adjacency structures of $C$ w.r.t.~$F_A$ can be renewed in $\widetilde{O}(m^{1/2})$ time for a link or cut on $F_A$. Given a query of whether $v\in C$ is adjacent to $T\in F_A$, it can be answered in $\widetilde{O}(1)$ time.
\end{lem}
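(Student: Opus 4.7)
The plan is to treat the query and update parts separately. For the query, to decide whether $v \in C$ is adjacent to $T \in F_A$, I look up the identifier of $T$ (its minimum vertex, which I maintain by a standard augmentation of the dynamic forest of Theorem \ref{thm_deterministic}) inside the adjacency tree of $v$. Because the adjacency tree is a balanced search tree, this costs $O(\lg n) = \widetilde{O}(1)$.

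For the update part, the key arithmetic fact is $|C| = O(m^{1/2})$: every $v \in C$ has $\text{deg}_G(v) > m^{1/2}$ while $\sum_v \text{deg}_G(v) = 2m$, so $|C| < 2 m^{1/2}$. The goal is therefore to spend only polylogarithmic time per $v \in C$ per link/cut on $F_A$, summing to $\widetilde{O}(m^{1/2})$. Consider first a cut of an edge $(u,w) \in T$ that splits $T$ into $T_1$ and $T_2$, whose Euler tour $\langle L_1, (u,w), L_2, (w,u), L_3 \rangle$ becomes $\langle L_1, L_3\rangle$ and $\langle L_2\rangle$. Following the argument in the proof of Lemma \ref{lem_path-graph}, I use the order tree of $T$ from Appendix \ref{sec_notation} once to extract the Euler-tour endpoints $a, b$ of $L_2$ in $O(\lg^2 n)$ time. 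Then, for each $v \in C$, I query the adjacency tree of $v$ for the old identifier of $T$; if absent, the work on $v$ is just the $O(\lg n)$ lookup; if present, I split the sub-adjacency tree of $v$ w.r.t.~$T$ (whose in-order is already the Euler-tour order of $T$) at the two positions corresponding to $a$ and $b$, producing the sub-adjacency trees w.r.t.~$T_1$ and $T_2$, and I replace the old identifier in the adjacency tree with the new identifiers of the non-empty pieces. All these balanced-BST operations cost $O(\lg n)$ each, so the per-cut total is $O(|C| \cdot \lg n) = \widetilde{O}(m^{1/2})$. The link case is symmetric: I concatenate the two sub-adjacency trees corresponding to the trees being joined and update the adjacency tree identifiers, again in $O(\lg n)$ per $v$.

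The main obstacle I anticipate is bookkeeping rather than conceptual: the sub-adjacency tree of $v$ w.r.t.~$T$ stores only a sparse subset of the vertices of $T$, so one cannot literally split it at $a$ or $b$ themselves (they need not be neighbours of $v$). The remedy is exactly the order tree on $T$ from Appendix \ref{sec_notation}: predecessor/successor queries in $O(\lg^2 n)$ convert $a$ and $b$ into Euler-tour ranks at which the sub-adjacency tree can be cleanly split. One also has to refresh the minimum-vertex identifiers of the newly created trees in $F_A$, but this fits into a standard augmentation of the underlying dynamic forest at polylogarithmic cost, which is absorbed into the $\widetilde{O}(m^{1/2})$ budget.
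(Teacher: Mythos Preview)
Your proposal is correct and follows essentially the same approach as the paper's proof: iterate over all $v\in C$ (using $|C|=O(m^{1/2})$), check whether $T$ is in the adjacency tree of $v$, and if so split the sub-adjacency tree into the pieces for $T_1$ and $T_2$ and refresh the identifiers. You supply more implementation detail than the paper does (the use of the Euler-tour endpoints $a,b$ and predecessor/successor via the order tree to locate the split points, and the maintenance of the minimum-vertex identifier), but the underlying argument is the same.
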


\begin{proof}
We only discuss the edge cut on $F_A$. The edge link is similar. The adjacency structures of the vertices in $C$ are renewed one by one. Consider $v\in C$. Suppose the edge cut occurs on $T$, splitting $T$ into $T_1$ and $T_2$. We check whether $T$ is in the adjacency tree of $v$. If \lq\lq no\rq\rq, the update is done; if \lq\lq yes\rq\rq, remove $T$ from it, and update the sub-adjacent tree of $v$ w.r.t.~$T$ to reflect the edge cut on $T$. For $T_j$ ($j=1,2$), add $T_j$ into the adjacent tree of $v$ if it is adjacent to $v$ (determined by whether a sub-adjacent tree of $v$ exists w.r.t.~$T_j$). For every vertex in $C$, we need to check and update when necessary. Hence the total update time is $\widetilde{O}(m^{1/2})$, since $|C|$ is $O(m^{1/2})$.

\end{proof}

\subsection{The Whole Structure}\label{sec_whole}
Now we turn to the discussion of the whole structure of our result. First, $V_A$ is removed. After that some artificial vertices and edges are added to the subgraph of $G$ induced by $V_B\cup C$, resulting in a graph $H$. (Note that we include the vertices in $C\setminus S$, rather than just $V_C$, which is $C\cap S$.) The artificial vertices and edges are used to restore the loss of connectivity due to the removal of $V_A$. Recall that the components of $G_A$ are either low or high. We describe how the artificial edges or vertices are added as follows.

\begin{itemize}
	\item Added by the path graphs: For $T\in F_A$, construct the path graph of $V_B$ w.r.t.~$T$.
	\item Added by the high components: For a high component $P\in G_A$, add a \emph{meta-vertex}. For $v\in C$ adjacent to $P$, add an artificial edge between $v$ and the meta-vertex. Identify the first vertex of the path graph of $V_B$ w.r.t.~$T$, where $T$ is the spanning tree of $P$. Add an artificial edge between \emph{the} first vertex and the meta-vertex.
	\item Added by the low components: For a low component $Q\in G_A$, construct a complete graph within the vertices in $C$ that are adjacent to $Q$. Similarly as above, identify the first vertex of the path graph of $V_B$ w.r.t.~$T$, where $T$ is the spanning tree of $Q$. Add the artificial edges between \emph{the} first vertex and the vertices in $C$ that are adjacent to $Q$.
\end{itemize}

After these, $H$ can be defined as follows.

\begin{itemize}
	\item The vertex set $V(H)$ of $H$: $V_B\cup C\cup M$, where $M$ is the set of meta-vertices. Since the degree of a high component is $> m^{1/4}$, and the vertices in $V_B\cup C$ are of degree $> m^{1/4}$, $H$ has $O(m^{3/4})$ vertices.
	\item The edge set $E(H)$ of $H$: The original edges of $G$ within $V_B\cup C$, and the artificial edges.
\end{itemize} 

Figure \ref{fig_whole-structure} gives an example for the construction. $H$ is a multigraph. Use $D[u,v]>0$ of edge multiplicity to represent the edge $(u,v)\in E(H)$. The discussion on the maintenance of $D[u,v]$'s is deferred to the end of the subsection. Now we construct a graph $G^*$, based on $H$.

\begin{figure}[t]
\includegraphics[scale=0.9]{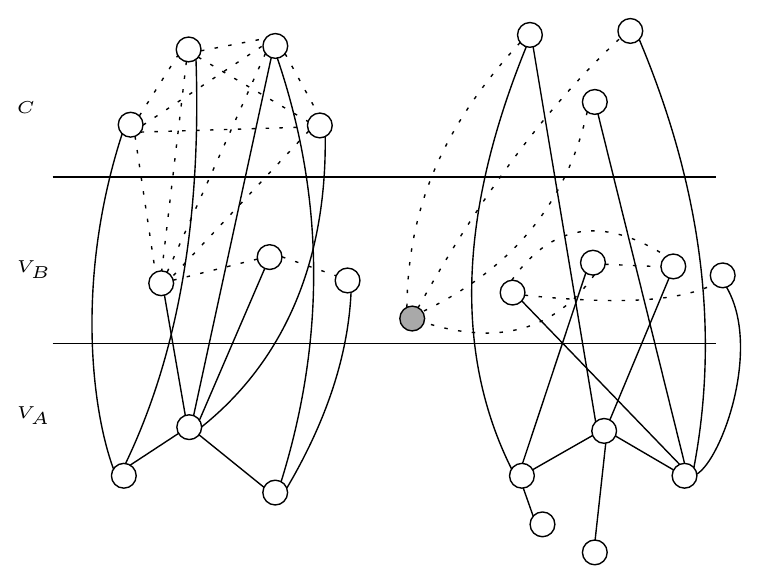}
\centering
\caption{An example of the whole structure. The irrelevant edges within $V_A$, $V_B$, and $C$ are omitted for clarity. The \emph{solid} edges are the edges in $G$, while the \emph{dotted} edges denote the artificial edges. The \emph{grey} vertex in the $V_B$ layer indicates a meta-vertex. The left component of $V_A$ is low; whereas the right one is high. We construct a complete graph within the vertices in $C$ w.r.t.~the low component.}
\label{fig_whole-structure}
\end{figure}

\begin{itemize}
	\item The vertex set $V(G^*)$ of $G^*$: $V_B\cup V_C\cup M$.
	\item The edge set $E(G^*)$ of $G^*$: The edges $(u,v)$'s with $D[u,v]>0$, where $u,v \in V(G^*), u\neq v$.
\end{itemize}

$G^*$ is a variant of the subgraph of $H$ induced by $V_B\cup V_C\cup M$. It excludes the vertices in $C\setminus S$, i.e., only the vertices in $V_C$ of $C$ are contained. Besides, the multiple edges are substituted by the single ones. $G^*$ is a simple graph. The randomized connectivity structure of Theorem \ref{thm_randomized} is maintained on $G^*$.

About the $D[u,v]$'s aforementioned, a balanced search tree is used to store them, with $D[u,v]$ indexed by $u+nv$ (assuming $u\le v$). Only $D[u,v]>0$ is stored in the search tree. We can check  whether $D[u,v]$ is in the search tree in $O(\lg n)$ time. Along the process of the updates, we might increment or decrement $D[u,v]$'s. When $D[u,v]$ decrements to 0, we remove it from the search tree. If both $u$ and $v$ are the vertices in $G^*$ and $u\neq v$, the edge $(u,v)$ is deleted from $G^*$. Similarly, when $D[u,v]$ increments to 1, we add it to the search tree. If both are the vertices in $G^*$ and $u\neq v$, the edge $(u,v)$ is inserted into $G^*$. $G^*$ captures the property of connectivity, which is stated in the following lemma.

\begin{lem}
	For any two vertices $u,v\in V_B\cup V_C$, they are connected in the subgraph of $G$ induced by $S$ if and only if they are connected in $G^*$.
	\label{lem_vbvc}
\end{lem}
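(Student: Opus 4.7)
The plan is to argue both directions by showing that the artificial apparatus associated to each component $X$ of $G_A$ is, for connectivity purposes, an exact substitute for $X$ itself together with its incident edges to $V_B\cup C$. Write $G[S]$ for the subgraph of $G$ induced by $S$. For the direction ``$\Rightarrow$'', I would take a path $u=w_0,w_1,\dotsc,w_k=v$ in $G[S]$ and partition it into maximal segments $w_i,w_{i+1},\dotsc,w_j$ whose interior vertices all lie in $V_A$. Since $V_A=A\cap S$, such an interior sits inside a single connected component $X$ of $G_A$, so both endpoints $w_i,w_j\in V_B\cup V_C$ are adjacent in $G$ to the same $X$. The whole lemma therefore reduces to the single claim that any two vertices of $V_B\cup V_C$ adjacent in $G$ to a common component $X$ are connected in $G^*$. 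I would prove this by case analysis. Let $T$ be the spanning tree of $X$ in $F_A$, and let $r_X$ denote the first vertex of the path graph of $V_B$ w.r.t.~$T$ when it exists. If both endpoints lie in $V_B$, they are stored on that path graph and joined by its artificial edges. If both lie in $V_C$, then when $X$ is high both connect to the meta-vertex of $X$, and when $X$ is low both belong to the complete graph installed among the $V_C$-neighbors of $X$. If one endpoint is in $V_B$ and the other in $V_C$, the former reaches $r_X$ through the path graph, and $r_X$ reaches the latter through the meta-vertex (high case) or via a direct first-vertex-to-$V_C$ artificial edge (low case). Membership is consistent because $V_B,V_C\subseteq S$, and vertices of $C\setminus S$ are dropped when passing from $H$ to $G^*$.

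For the direction ``$\Leftarrow$'', I would take a path in $G^*$ between $u,v\in V_B\cup V_C$ and realize each of its edges as a walk in $G[S]$. An original edge of $G$ within $V_B\cup V_C$ is already present. An artificial path-graph edge joins two $V_B$-neighbors of some component $X$, and is realized as a walk that enters $X$ at one neighbor, travels through the spanning tree of $X$ in $V_A\subseteq S$, and leaves at the other. Complete-graph and first-vertex-to-$V_C$ edges around a low component $X$ are realized similarly. Meta-vertices occur only as interior vertices of the $G^*$-path; each pair of edges incident to the same meta-vertex connects two $G$-neighbors of the corresponding high component $X$, and is again realized by a walk through $X$. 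Concatenating these realizations produces a walk from $u$ to $v$ in $G[S]$.

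The main subtlety I expect is the bookkeeping around the hub $r_X$, which is the sole bridge between the $V_B$-side (path graph) and the $V_C$-side (meta-vertex or complete graph) of the artificial structure for a component $X$. A clean way to handle this is to argue at the level of components, stating once that the portion of $G^*$ associated to a single $X$ is itself connected and contains precisely the relevant vertices of $V_B\cup V_C$, plus the meta-vertex when $X$ is high. The degenerate cases in which a component has no $V_B$-neighbors (so the path graph and $r_X$ are absent) or no $V_C$-neighbors (so the meta-vertex or complete graph is absent) can then be checked separately to be vacuously consistent with both directions.
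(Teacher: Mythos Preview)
Your proposal is correct and follows essentially the same approach as the paper: replace each maximal $V_A$-segment of a path in $G[S]$ by the artificial apparatus attached to the corresponding component of $G_A$, with the same three-way case analysis ($V_B$--$V_B$ via the path graph, $V_C$--$V_C$ via the meta-vertex or the complete graph, and $V_B$--$V_C$ via the first vertex $r_X$). Your write-up is more thorough than the paper's brief sketch---you spell out the converse direction and the degenerate cases explicitly---but the underlying argument is the same.
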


\begin{proof}
	$G^*$ is a variant of the subgraph of $G$ induced by $S$. $G^*$ removes $V_A$ from \emph{the} subgraph. Connectivity within $V_B$ via $V_A$ is restored by the path graphs. Connectivity within $V_C$ via $V_A$ is restored \emph{either} by linking with the same meta-vertex, \emph{or} by the complete graph constructed. Lastly, for the connectivity between $V_C$ and $V_B$ via $V_A$, it is restored by the first vertex of the path graph linking with the meta-vertex, or with all the relevant vertices in $V_C$. Consider a path between $u$ and $v$ in the subgraph induced by $S$, the segments of the path consisting only of the vertices in $V_A$ can be eliminated, as the \lq\lq via $V_A$\rq\rq~connectivity is restored as discussed. The lemma follows.
	
\end{proof}

\subsection{Update and Query}

We discuss how the vertex updates are reflected efficiently in the whole structure constructed. The difficulty is to keep $D[u,v]$'s being consistent with $S$. As $E(G^*)$ is a subset of the $(u,v)$'s with $D[u,v]>0$, it might also need to be updated. We present the query algorithm and analyze it at the end of the subsection.

\begin{lem}\label{lem_update_query_time}
	The whole structure constructed has the worst-case vertex update time $\widetilde{O}(m^{3/4})$.
\end{lem}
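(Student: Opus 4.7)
The plan is to case-split the vertex update on whether $v$ lies in $A$, $B$, or $C$, since these classes trigger qualitatively different work. In every case the dominant subroutine will be pushing edges in and out of $G^*$, on which Theorem \ref{thm_randomized} charges polylog per edge, so the analysis reduces to bounding the number of edge movements and of the accompanying $D[\cdot,\cdot]$ changes.

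First I would dispose of the two easy cases. For $v\in B$, Lemma \ref{lem_path-graph} refreshes the path graphs in $\widetilde{O}(m^{1/2})$ time, and $v$ itself enters or leaves $G^*$ together with its $O(\deg_G(v))=O(m^{1/2})$ original edges to $V_B\cup V_C$ and the $O(m^{1/2})$ incident artificial edges coming from the subpath trees it belongs to; the total is $\widetilde{O}(m^{1/2})$. For $v\in C$, although $\deg_G(v)$ may be as large as $m$, the $D[u,v]$ table has already been built as if $v$ were in $S$ (because membership in $C$ depends only on the static quantity $\deg_G$, not on $S$), so the only action is to insert or delete $v$ in $G^*$ along with its $G^*$-neighbors, enumerated from $v$'s adjacency list in $H$. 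Since $|V(G^*)|=O(m^{3/4})$ by the bound on $|V(H)|$ established in Section \ref{sec_whole}, this costs $\widetilde{O}(m^{3/4})$.

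The hard case, and the main obstacle, is $v\in A$. Such an update cascades into at most $\deg_G(v)\le m^{1/4}$ link or cut operations on $F_A$, one per edge of $v$ in $G_A$. Each one costs $\widetilde{O}(m^{1/2})$ inside $F_A$ by Theorem \ref{thm_deterministic}, another $\widetilde{O}(m^{1/2})$ to refresh the adjacency structures of $C$ by Lemma \ref{lem_adjacency-structure}, and only $\widetilde{O}(1)$ to refresh the path graphs by Lemma \ref{lem_path-graph}, producing the target $m^{1/4}\cdot\widetilde{O}(m^{1/2})=\widetilde{O}(m^{3/4})$. What must be argued carefully is the collateral damage on the artificial edges: each link or cut can flip one or two components between low and high status, and such a flip replaces the complete graph on at most $m^{1/4}$ $C$-neighbors (with $O(m^{1/2})$ artificial edges) by a single meta-vertex together with a star of $O(m^{1/4})$ artificial edges, or the reverse. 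Over the $O(m^{1/4})$ link/cuts this contributes at most $O(m^{3/4})$ updates to the $D[\cdot,\cdot]$ table, each forwarded to $G^*$ in polylog time, so it still fits inside the budget. The residual bookkeeping --- installing or dismantling $v$'s subpath tree and inserting or removing $v$ from the path tree of its $F_A$-tree --- costs only $\widetilde{O}(m^{1/4})$.

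Taking the maximum across the three cases yields $\widetilde{O}(m^{3/4})$. The one step where I expect difficulty is the $v\in A$ bookkeeping: keeping $D[\cdot,\cdot]$, the low/high status flags of the $F_A$-components, and $E(G^*)$ synchronized through the entire cascade of $m^{1/4}$ link/cuts without letting any single step spill beyond $\widetilde{O}(m^{1/2})$ artificial-edge changes. Once that invariant is in place the three sub-costs --- forest maintenance, adjacency-structure refresh, and low/high flips --- each sum to $\widetilde{O}(m^{3/4})$, completing the bound.
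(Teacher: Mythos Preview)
Your case split and the $C$-case analysis match the paper. There is, however, a genuine gap in your $v\in B$ analysis. You account only for the path-graph edges and the original edges of $v$, but you overlook the artificial edges that the construction in Section~\ref{sec_whole} attaches from the \emph{first vertex} of each path graph to either the meta-vertex (for a high component) or to every $C$-vertex adjacent to the component (for a low component). When $v$ is removed, every tree $T\in F_A$ for which $v$ happened to be that first vertex must have those edges detached from $v$ and re-attached to the new first vertex. Since $\deg_G(v)\le m^{1/2}$, $v$ may be adjacent to as many as $m^{1/2}$ components of $G_A$, and each low one may have up to $m^{1/4}$ adjacent $C$-vertices, so this re-attachment costs $\widetilde{O}(m^{3/4})$, not $\widetilde{O}(m^{1/2})$. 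Without this step $G^*$ no longer reflects connectivity through $V_A$, so the structure is incorrect; with it, the $B$ case already sits at the target bound rather than being an ``easy'' case.

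Your $v\in A$ sketch is in the right direction but also incomplete: you treat only link/cuts that \emph{flip} a component between low and high. A link that merges two low components into a still-low component changes the set of adjacent $C$-vertices and hence forces an update of the associated complete graph; a cut inside a high component that leaves both pieces high still requires rebuilding each meta-vertex's star, which the paper does by sweeping all of $|C|=O(m^{1/2})$ vertices and querying their adjacency structures (Lemma~\ref{lem_adjacency-structure}). Every affected component therefore costs $\widetilde{O}(m^{1/2})$ regardless of whether a flip occurs, and the $O(m^{1/4})$ affected components give the $\widetilde{O}(m^{3/4})$ total. The difficulty you flag is real; the missing ingredient is that \emph{every} link/cut, not just the flipping ones, triggers $\widetilde{O}(m^{1/2})$ artificial-edge work.
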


\begin{proof}

We discuss the various cases of vertex updates, categorized according to whether $v\in A$, or $\in B$, or $\in C$.

\begin{itemize}
	
\item $v\in A$: Consider the case of inserting $v$ into $S$. $v$ is first inserted as a singleton component containing only $v$ in $G_A$. Next the edges incident on $v$ are restored in the following order: First, the edges between $v$ and $C$; second, the edges between $v$ and $V_B$; third, the edges between $v$ and $V_A$.

Restore the edges between $v$ and $C$: For every $u$ adjacent to $v$ where $u\in C$, construct a sub-adjacency tree (containing only $v$) of $u$, and insert $v$ into the adjacency tree of $u$. Next the complete graph within these $u$'s in $C$ is constructed. Because $\text{deg}_G(v)\le m^{1/4}$, i.e.~a low component, the update time is $\widetilde{O}(m^{1/2})$, dominated by constructing the complete graph.

Restore the edges between $v$ and $V_B$: Construct the subpath tree and the path tree of $v$. Add the path-graph edges associated with $v$ (\emph{Add} means incrementing the corresponding entry $D[u,v]$), and the edges \emph{between} the first vertex of the path graph \emph{and} the vertices in $C$ that are adjacent to $v$. The update time is $\widetilde{O}(m^{1/4})$.

Restore the edges between $v$ and $V_A$: $\widetilde{O}(\sqrt{m})$ deterministic data structure maintaining $F_A$ is updated in $\widetilde{O}(m^{3/4})$ time. As $\text{deg}_G(v)\le m^{1/4}$, the link or cut on $F_A$ happens $O(m^{1/4})$ times. Consequently, according to Lemma \ref{lem_path-graph}, the path graphs are updated in $\widetilde{O}(m^{1/4})$ time. According to Lemma \ref{lem_adjacency-structure}, the adjacency structures are updated in $\widetilde{O}(m^{3/4})$ time.

$O(m^{1/4})$ components of $G_A$ are affected. For every high component, using the adjacency structures, the edges between $C$ and the meta-vertex (corresponding to the high component) can be determined in $\widetilde{O}(m^{1/2})$ time according to Lemma \ref{lem_adjacency-structure}, since $|C|=O(m^{1/2})$; for every low component, as the degree of a low component is $\le m^{1/4}$, $\widetilde{O}(m^{1/2})$ time suffices to construct the complete graph within the vertices in $C$ that are adjacent to the low component, and $\widetilde{O}(m^{1/4})$ time suffices to construct the edges \emph{between} the first vertex of the path graph w.r.t.~the low component \emph{and} the vertices in $C$ that are adjacent to the low component. Hence no matter whether the component is low or high, the update time is $\widetilde{O}(m^{1/2})$. The time needed to update all these components is $\widetilde{O}(m^{3/4})$. Deleting of $v\in S$ from $S$ is a reverse process. In summary, a vertex update of $v\in A$ requires $\widetilde{O}(m^{3/4})$ time.

\item $v\in B$: Consider the case when $v\in S$ is removed. The case of insertion is the reverse. First destroy the edges between $v$ and $V_A$. According to Lemma \ref{lem_path-graph}, the path graphs can be updated in $\widetilde{O}(m^{1/2})$ time. Besides, $v$ might be the first vertex of some path graphs. We see how it is updated. $v$ can be adjacent to $\le m^{1/2}$ components of $G_A$, as $\text{deg}_G(v)\le m^{1/2}$. For a high component, as only one edge linking $v$ with the meta-vertex, the update is easy; for a low component, since only $\le m^{1/4}$ edges can be outward for a low component, $\widetilde{O}(m^{1/4})$ time suffices for updating the edges between $v$ and the vertices in $C$ that are adjacent to the low component. Hence the update time for $v$ being the first vertex of some path graphs is $\widetilde{O}(m^{3/4})$. Until now the artificial edges concerning $v$ are removed. Other edges concerning $v$ are the original edges in $G$. Hence we can remove these original edges one by one in $\widetilde{O}(m^{1/2})$ time as $\text{deg}_G(v)\le m^{1/2}$. In summary, the total update time of $v\in B$ is $\widetilde{O}(m^{3/4})$.

\item $v\in C$: As there are only $O(m^{3/4})$ vertices in $G^*$, the update time is $\widetilde{O}(m^{3/4})$. The relevant $D[u,v]$'s are left intact, and the adjacency structure of $v$ is not destroyed (if $v$ is removed from $S$). The total update time is $\widetilde{O}(m^{3/4})$.

\end{itemize}

\end{proof}

Now we describe the query algorithm: Given $u,v\in S$, the goal is to substitute them with the \emph{equivalent} vertices in $G^*$, where an \emph{equivalent} vertex of $u$ (or $v$) is a vertex in $G^*$ that is connected with $u$ (or $v$). As $V(G^*)=V_B\cup V_C\cup M$, if $u,v\in V_B\cup V_C$, the search for the equivalent vertices is done. Otherwise, if $u$ (or $v$) is in a high component, replace $u$ (or $v$) with the meta-vertex corresponding to the high component; if $u$ (or $v$) is in a low component, exhaustively search the outward edges of the low component for a vertex of $G^*$. When the equivalent vertex of $u$ (or $v$) cannot be found, it indicates that $u$ (or $v$) is in a low component of $G_A$, and the low component is not connected with any vertex in $V_B\cup V_C$. Intuitively $u$ (or $v$) is on an \lq\lq island\rq\rq~of $G_A$.

\begin{lem}\label{lem_correctness}
The time complexity of the query algorithm is $\widetilde{O}(m^{1/4})$. The answer to every query is correct if the answer is \lq\lq yes\rq\rq, and is correct w.h.p.~if the answer is \lq\lq no\rq\rq.
\end{lem}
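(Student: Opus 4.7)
\medskip
\noindent\textbf{Proof plan.} The plan is to separately bound the running time of each step of the query and then argue the correctness case by case, drawing on Lemma~\ref{lem_vbvc} and Theorem~\ref{thm_randomized}. The query consists of (a) replacing $u$ and $v$ by equivalent vertices in $G^*$, (b) running a connectivity query in $G^*$, and (c) handling the "island" case.

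\medskip
For the time bound, I would first examine the cost of finding an equivalent vertex for a single endpoint. If the endpoint is already in $V_B\cup V_C$, the replacement is trivial. If it lies in a component $P$ of $G_A$ that is high, the meta-vertex of $P$ is returned in $O(\lg n)$ time by looking up the component in $F_A$ using the structure of Theorem~\ref{thm_deterministic}. If it lies in a low component $Q$, we scan the outward edges of $Q$; since a low component has total degree at most $m^{1/4}$, there are $O(m^{1/4})$ such edges, and for each we pay $O(\lg n)$ to test membership in $V_B\cup V_C$ using a balanced search tree. This gives $\widetilde{O}(m^{1/4})$ per endpoint. Step (b) costs $O(\lg n/\lg\lg n)$ by Theorem~\ref{thm_randomized}, and step (c) costs $O(1)$ by Theorem~\ref{thm_deterministic}. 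Adding these yields $\widetilde{O}(m^{1/4})$ overall.

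\medskip
For correctness, I would split into cases according to whether equivalent vertices were found for $u$ and $v$. If both equivalent vertices $u^*, v^*$ exist, then by construction each of $u, u^*, v, v^*$ is connected to its equivalent vertex through $G_A$ together with the artificial edges added for high or low components (a meta-vertex edge, or an edge through the first vertex of the associated path graph, or the complete graph on the $C$-neighbors of a low component). Therefore $u$ and $v$ are connected in $G[S]$ iff $u^*$ and $v^*$ are connected in the subgraph of $G$ induced by $S$ restricted to $V_B\cup V_C\cup M$, which by Lemma~\ref{lem_vbvc} (extended to meta-vertices via their definition) is equivalent to their being connected in $G^*$. If either equivalent vertex is missing, then the corresponding endpoint lies in a low component $Q$ whose outward edges all go to inactive vertices in $C\setminus S$; consequently this endpoint is connected in $G[S]$ only to vertices of $V_A$ sitting in the same component of $G_A$. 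Hence the answer is "yes" iff $u$ and $v$ belong to the same component of $G_A$, which is decided deterministically using $F_A$.

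\medskip
The error analysis comes directly from Theorem~\ref{thm_randomized}: the randomized structure on $G^*$ never errs on "yes" answers and errs on "no" answers with probability at most $1/n^c$, and the deterministic branch used in the island case never errs, so the probabilistic guarantee advertised in the lemma transfers verbatim. The main subtlety I anticipate is handling the island case cleanly and verifying that the choice of equivalent vertex within a component is immaterial --- any vertex in $V_B\cup V_C\cup M$ reachable from the endpoint through the restored connectivity works, because all such candidates lie in the same connected component of $G^*$ by Lemma~\ref{lem_vbvc}.
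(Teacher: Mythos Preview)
Your proposal is correct and follows essentially the same approach as the paper: bound the query time by the exhaustive scan of a low component's outward edges, reduce correctness to Lemma~\ref{lem_vbvc} via equivalent vertices, handle the island case with the deterministic structure on $G_A$, and inherit the one-sided error from Theorem~\ref{thm_randomized}. The only additional point the paper makes explicit is that a single vertex update triggers a \emph{sequence} of edge updates in $G^*$, and it argues that the Kapron--King--Mountjoy witness remains a spanning forest of $G^*$ w.h.p.\ after the last such edge update, so the per-query guarantee of Theorem~\ref{thm_randomized} still applies; you implicitly rely on this by citing the theorem, which is fine.
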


\begin{proof}

Connectivity within $G^*$ is answered by the randomized connectivity structure on $G^*$; whereas for the other cases, $u$ and $v$ are connected if and only if they are in the same component of $G_A$, of which the queries can be answered by the deterministic connectivity structure on $G_A$. The time complexity is dominated by the exhaustive search if $u$ (or $v$) is in a low component, and thus is $\widetilde{O}(m^{1/4})$.

The correctness can be analyzed as follows. If $u,v\in V_B\cup V_C$, it follows from Lemma \ref{lem_vbvc}; otherwise, for any one not in, we only replace it with an equivalent vertex of $G^*$. If such an equivalent vertex cannot be found, the queried vertex is on an island aforementioned of $G_A$. Then $u$ and $v$ are connected if and only if they are on the same island. We analyze the error probability. A deterministic connectivity structure is adopted for $G_A$. $F_A$ is always a spanning forest of $G_A$. The queries are answered \emph{either} by the deterministic connectivity structure if at least one queried vertex is on an island aforementioned of $G_A$, \emph{or} by the randomized connectivity structure if both queried vertices are (replaced with) the vertices in $G^*$. The deterministic connectivity structure always gives the right answer; whereas the randomized one might answer erroneously. The randomized algorithm of \cite{kapron2013dynamic} maintains a private \emph{witness} of a spanning forest of $G^*$. The algorithm has the property that after every update, the witness is a spanning forest of $G^*$ with probability $\ge 1-1/n^c$. It is the property which ensures the answers are correct w.h.p.. Here, after every vertex update (which is transformed into a sequence of edge updates in $G^*$), the witness for $G^*$ is also a spanning forest of $G^*$ w.h.p.~after the vertex update. We can just focus on the correctness of the witness at the point after the last transformed edge update. Consequently, the error probability is negligible, i.e., $\le 1/n^c$ for any constant $c$.

\end{proof}

The analysis of space and pre-processing time is presented in Appendix \ref{app_space}. The main theorem, i.e.~Theorem \ref{thm_main}, follows from Lemmas \ref{lem_update_query_time}, \ref{lem_correctness}, \ref{lem_space_preprocess}, and \ref{lem_edge_update}.

\bibliographystyle{plain}
\bibliography{v2}

\begin{thebibliography}{10}

\bibitem{baswana2016dynamic}
Surender Baswana, Shreejit~Ray Chaudhury, Keerti Choudhary, and Shahbaz Khan.
\newblock Dynamic {DFS} in undirected graphs: breaking the {$O(m)$} barrier.
\newblock In {\em Proceedings of the twenty-seventh Annual ACM-SIAM Symposium
  on Discrete Algorithms}, pages 730--739. SIAM, 2016.

\bibitem{chan2002dynamic}
Timothy~M. Chan.
\newblock Dynamic subgraph connectivity with geometric applications.
\newblock In {\em Proceedings of the thiry-fourth annual ACM Symposium on
  Theory of Computing}, pages 7--13. ACM, 2002.

\bibitem{chan2011dynamic}
Timothy~M. Chan, Mihai P\v{a}tra\c{s}cu, and Liam Roditty.
\newblock Dynamic connectivity: Connecting to networks and geometry.
\newblock {\em SIAM Journal on Computing}, 40(2):333--349, 2011.

\bibitem{DBLP:journals/corr/ChenDWZ16}
Lijie Chen, Ran Duan, Ruosong Wang, and Hanrui Zhang.
\newblock Improved algorithms for maintaining {DFS} tree in undirected graphs.
\newblock {\em CoRR}, abs/1607.04913, 2016.

\bibitem{Coppersmith1990251}
Don Coppersmith and Shmuel Winograd.
\newblock Matrix multiplication via arithmetic progressions.
\newblock {\em Journal of Symbolic Computation}, 9(3):251 -- 280, 1990.
\newblock Computational algebraic complexity editorial.

\bibitem{Cormen:2009:IAT:1614191}
Thomas~H. Cormen, Charles~E. Leiserson, Ronald~L. Rivest, and Clifford Stein.
\newblock {\em Introduction to Algorithms, Third Edition}.
\newblock The MIT Press, 3rd edition, 2009.

\bibitem{duan2010new}
Ran Duan.
\newblock New data structures for subgraph connectivity.
\newblock In {\em Automata, Languages and Programming}, pages 201--212.
  Springer, 2010.

\bibitem{eppstein1997sparsification}
David Eppstein, Zvi Galil, Giuseppe~F. Italiano, and Amnon Nissenzweig.
\newblock Sparsification{--}a technique for speeding up dynamic graph
  algorithms.
\newblock {\em Journal of the ACM}, 44(5):669--696, 1997.

\bibitem{frederickson1985data}
Greg~N. Frederickson.
\newblock Data structures for on-line updating of minimum spanning trees, with
  applications.
\newblock {\em SIAM Journal on Computing}, 14(4):781--798, 1985.

\bibitem{frigioni2000dynamically}
Daniele Frigioni and Giuseppe~F. Italiano.
\newblock Dynamically switching vertices in planar graphs.
\newblock {\em Algorithmica}, 28(1):76--103, 2000.

\bibitem{DBLP:journals/corr/GibbKKT15}
David Gibb, Bruce~M. Kapron, Valerie King, and Nolan Thorn.
\newblock Dynamic graph connectivity with improved worst case update time and
  sublinear space.
\newblock {\em CoRR}, abs/1509.06464, 2015.

\bibitem{henzinger2015unifying}
Monika Henzinger, Sebastian Krinninger, Danupon Nanongkai, and Thatchaphol
  Saranurak.
\newblock Unifying and strengthening hardness for dynamic problems via the
  online matrix-vector multiplication conjecture.
\newblock In {\em Proceedings of the forty-seventh Annual ACM on Symposium on
  Theory of Computing}, pages 21--30. ACM, 2015.

\bibitem{henzinger1999randomized}
Monika~R. Henzinger and Valerie King.
\newblock Randomized fully dynamic graph algorithms with polylogarithmic time
  per operation.
\newblock {\em Journal of the ACM}, 46(4):502--516, 1999.

\bibitem{holm2001poly}
Jacob Holm, Kristian De~Lichtenberg, and Mikkel Thorup.
\newblock Poly-logarithmic deterministic fully-dynamic algorithms for
  connectivity, minimum spanning tree, 2-edge, and biconnectivity.
\newblock {\em Journal of the ACM}, 48(4):723--760, 2001.

\bibitem{huang2016}
Shang{-}En Huang, Dawei Huang, Tsvi Kopelowitz, and Seth Pettie.
\newblock Fully dynamic connectivity in {$O(\log n(\log\log n)^2)$} amortized
  expected time.
\newblock In {\em Proceedings of the twenty-eighth Annual ACM-SIAM Symposium on
  Discrete Algorithms}. SIAM, 2017.

\bibitem{kapron2013dynamic}
Bruce~M. Kapron, Valerie King, and Ben Mountjoy.
\newblock Dynamic graph connectivity in polylogarithmic worst case time.
\newblock In {\em Proceedings of the twenty-fourth Annual ACM-SIAM Symposium on
  Discrete Algorithms}, pages 1131--1142. SIAM, 2013.

\bibitem{DBLP:journals/corr/Kejlberg-Rasmussen15}
Casper Kejlberg{-}Rasmussen, Tsvi Kopelowitz, Seth Pettie, and Mikkel Thorup.
\newblock Faster worst case deterministic dynamic connectivity.
\newblock In {\em Proceedings of the twenty-fourth Annual European Symposium on
  Algorithms}, 2016.

\bibitem{miltersen1994complexity}
Peter~Bro Miltersen, Sairam Subramanian, Jeffrey~Scott Vitter, and Roberto
  Tamassia.
\newblock Complexity models for incremental computation.
\newblock {\em Theoretical Computer Science}, 130(1):203--236, 1994.

\bibitem{DynamicConnectivity_SICOMP}
Mihai P\v{a}tra\c{s}cu and Erik~D. Demaine.
\newblock Logarithmic lower bounds in the cell-probe model.
\newblock {\em SIAM Journal on Computing}, 35(4):932--963, 2006.
\newblock Special issue of selected papers from the thirty-sixth ACM Symposium
  on Theory of Computing, 2004.

\bibitem{tarjan1997dynamic}
Robert~E. Tarjan.
\newblock Dynamic trees as search trees via {E}uler tours, applied to the
  network simplex algorithm.
\newblock {\em Mathematical Programming}, 78(2):169--177, 1997.

\bibitem{thorup2000near}
Mikkel Thorup.
\newblock Near-optimal fully-dynamic graph connectivity.
\newblock In {\em Proceedings of the thirty-second annual ACM Symposium on
  Theory of Computing}, pages 343--350. ACM, 2000.

\bibitem{DBLP:journals/corr/Wang15w}
Zhengyu Wang.
\newblock An improved randomized data structure for dynamic graph connectivity.
\newblock {\em CoRR}, abs/1510.04590, 2015.

\bibitem{wulff2013faster}
Christian Wulff-Nilsen.
\newblock Faster deterministic fully-dynamic graph connectivity.
\newblock In {\em Proceedings of the twenty-fourth Annual ACM-SIAM Symposium on
  Discrete Algorithms}, pages 1757--1769. SIAM, 2013.

\end{thebibliography}

\appendix

\section{Trees as Euler Tours}\label{sec_notation}

An \emph{Euler tour} (\emph{cycle}) for a digraph is a cycle that traverses each edge exactly once, although it may visit a vertex more than once. For a strongly connected digraph, it has an Euler tour if and only if the \emph{in-degree} and the \emph{out-degree} are equal for each vertex. We adopt the Euler tours to represent the dynamic trees. The representation was first proposed by Miltersen et al.~\cite{miltersen1994complexity} and independently by Henzinger and King \cite{henzinger1999randomized}.

Given a tree $T$, following the idea of Tarjan's \cite{tarjan1997dynamic}, we replace each edge $\{v,w\}$ by two directed arcs $(v,w)$ and $(w,v)$, and add a self-loop $(v,v)$ for each node $v$. The resulted digraph always has at least one Euler tour. Choose any one, and break it at an arbitrary place. Store the resulted list of directed arcs as a balanced search tree, which is called the Euler-tour tree, a.k.a.~ET-tree.

The self-loop $(v,v)$ in the ET-tree actually represents $v\in T$. We can extract the order of these self-loops, and represent the order in another balanced search tree. Name it the \emph{order tree}, which generates an Euler-tour ordering of the nodes of $T$. We have the following observation.

\begin{obs}
	Given $u,v\in T$, the order of them w.r.t.~the Euler-tour order can be determined by the order tree in $O(\lg n)$ worst-case time.
\end{obs}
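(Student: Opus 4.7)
The plan is to show that the order tree, being a balanced binary search tree augmented with subtree sizes, supports rank queries in $O(\lg n)$ worst-case time, and that comparing two Euler-tour positions reduces to comparing ranks.

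First I would observe that the order tree contains exactly one node per vertex of $T$ (the self-loop $(v,v)$), keyed by its position in the stored Euler tour, and therefore has height $O(\lg n)$. I would augment each node with the size of its subtree; this augmentation is standard and can be maintained under the usual rotations of a balanced BST (red-black, AVL, etc.) at $O(1)$ additional cost per rotation, so it does not affect any of the update bounds elsewhere in the paper.

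Next I would assume, as is standard in the Euler-tour-tree literature, that each vertex $v \in T$ stores a direct pointer to the order-tree node representing its self-loop $(v,v)$. Given such a pointer to a node $x$, its rank (position in the in-order traversal) can be computed by walking from $x$ up to the root: at $x$ itself, initialize the rank to $1 + \mathrm{size}(\mathrm{left}(x))$; then at each step, if we ascend from a right child, add $1 + \mathrm{size}(\mathrm{left}(\mathrm{parent}))$. Since the tree has height $O(\lg n)$, this walk finishes in $O(\lg n)$ worst-case time.

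To answer the query I would compute $r_u$ and $r_v$ this way and return whichever has the smaller rank as the earlier vertex in the Euler-tour order, for a total of $O(\lg n)$ time. There is no real obstacle here; the only subtlety is ensuring the vertex-to-node pointers remain valid under insertions, deletions, and rotations in the order tree, which is handled in the same way as the analogous bookkeeping for the underlying ET-tree and so adds only $O(1)$ work per tree operation.
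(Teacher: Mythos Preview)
Your argument is correct. The paper, however, does not actually prove this statement: it is recorded as an observation immediately after the order tree is defined as a balanced search tree on the self-loops, and no further justification is given. Your rank-computation via subtree-size augmentation is the standard way to make the observation precise; an equally valid (and slightly leaner) alternative is to compare the two nodes directly by walking both to the root and determining at their lowest common ancestor which lies in the left subtree, avoiding the need for size fields. Either way the bound is $O(\lg n)$, and your handling of the vertex-to-node pointers matches the implicit bookkeeping the paper relies on elsewhere.
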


As shown in \cite{tarjan1997dynamic}, the ET-tree of $T$ has $O(\lg n)$ worst-case update time for a \emph{link} or \emph{cut} operation on $T$. We show briefly how to get the order trees for the two resulted trees after an edge cut. The edge link is similar. Suppose $cut(\{v,w\})$, and the Euler tour of $T$ is $\langle L_1, (v,w), L_2, (w,v), L_3\rangle$. The Euler tours of the two resulted trees should be $\langle L_1, L_3\rangle$ and $\langle L_2\rangle$. Augment the ET-tree to enable us to find the first and the last self-loop edges in $O(\lg n)$ time. (It is achieved by a boolean bit at each node of the ET-tree indicating whether there exists a self-loop in the subtree rooted at the search tree node.) We query the ET-tree of $\langle L_2\rangle$ to get the first node $a$ and the last node $b$. The order trees for the two resulted subtrees can be obtained by splitting the initial order tree first before $a$ and then after $b$, and joining the first part and the last part. The update time is easily observed as follows.

\begin{obs}
	ET-trees and order trees can be updated in $O(\lg n)$ worst-case time for a link or cut on dynamic trees.
\end{obs}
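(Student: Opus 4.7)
The plan is to reduce both parts of the claim to a constant number of standard split/join operations on balanced binary search trees, each costing $O(\lg n)$ worst-case time; the total cost per link or cut on the underlying dynamic tree is therefore $O(\lg n)$.

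First I would dispose of the ET-tree, which is Tarjan's construction already cited. Suppose the Euler tour is $\langle L_1,(v,w),L_2,(w,v),L_3\rangle$ and we perform $cut(\{v,w\})$. Locating the two directed arcs in the balanced search tree costs $O(\lg n)$; a split before $(v,w)$, a split after $(w,v)$, and a join of $L_1$ with $L_3$ then yield the ET-tree of the component that retains $v$, while $L_2$ is the ET-tree of the new component containing $w$. The link case is symmetric: splice in the two new directed arcs (and a new self-loop at a newly introduced vertex, if any) via $O(1)$ splits and joins at the positions dictated by the edge being added. In each case $O(1)$ operations on a balanced BST are performed, each $O(\lg n)$.

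Next I would handle the order tree, using the augmentation already described in the preceding text: each internal node of the ET-tree carries a boolean bit indicating whether its subtree contains a self-loop, which lets us descend in $O(\lg n)$ time to the first self-loop $a$ and the last self-loop $b$ lying in the detached sub-tour $L_2$. Since the entries of the order tree are exactly the self-loops of the ET-tree taken in Euler-tour order, the self-loops that fall in $L_2$ form the contiguous range $[a,b]$ of the order tree. I would then split the order tree immediately before $a$, split the remainder immediately after $b$, and join the two outer fragments to produce the order tree of the component retaining $v$, while the middle fragment becomes the order tree of the other component. A link is symmetric: find the insertion positions dictated by the new edge using the augmented ET-tree, then merge two order trees with $O(1)$ splits/joins. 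Each operation is again $O(\lg n)$, giving $O(\lg n)$ overall.

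The only real bookkeeping concern, rather than an obstacle, is restoring the augmentation bit along the $O(\lg n)$ spines affected by each split/join, which follows the usual recurrence \emph{bit($x$) = bit(left) $\vee$ bit(right) $\vee$ (self-loop at $x$)} and is standard for augmented balanced BSTs. No deeper difficulty is expected, and the observation follows.
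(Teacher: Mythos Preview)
Your proposal is correct and follows essentially the same approach as the paper: reduce the ET-tree update to Tarjan's $O(1)$ splits/joins, then use the boolean self-loop augmentation on the ET-tree to locate the first and last self-loops $a,b$ of $L_2$ and split/join the order tree before $a$ and after $b$. You spell out the bookkeeping for the augmentation bit a bit more explicitly than the paper does, but the argument is otherwise identical.
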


\section{Space Complexity and Pre-processing Time}\label{app_space}

The space usage consists of the deterministic connectivity structure on $G_A$, the search tree storing $D[u,v]>0$, the randomized connectivity structure on $G^*$, and the auxiliary structures, e.g., the adjacency structures. The space complexity and pre-processing time are summarized as the following lemma. From the proof we observe that the pre-processing time is dominated by $D[u,v]$ entries constructed for the complete-graph edges within $C$.

\begin{lem}\label{lem_space_preprocess}
The pre-processing time of the dynamic subgraph connectivity data structure is $\widetilde{O}(m^{5/4})$, and its space usage is linear.
\end{lem}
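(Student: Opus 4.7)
The plan is to account separately for each component of the data structure: the deterministic connectivity structure on $G_A$, the randomized structure on $G^*$, the search tree storing the multiplicities $D[u,v]$, the path graphs (subpath trees and path trees) for each $T\in F_A$, and the adjacency structures for each $v\in C$. For space, I would argue each piece is $O(m)$ by a charging argument; for pre-processing time, I would isolate the single super-linear step (the construction of the complete graphs within $C$ for low components of $G_A$) and show it costs $\widetilde{O}(m^{5/4})$ while everything else is $\widetilde{O}(m)$.

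For the linear space bound, the dominant terms are the auxiliary structures. The total size of all subpath trees is at most the number of edges between $V_A$ and $V_B$, which is $O(m)$; the path trees are dominated by the subpath trees. Similarly, the total size of all sub-adjacency trees over $v\in C$ is $\sum_{v\in C} |\{u\in V_A:(u,v)\in E\}|\le \sum_{v\in C}\deg_G(v)=O(m)$. For the search tree of $D[u,v]$ entries, I would bound the number of \emph{distinct} pairs: edges of $G$ within $V_B\cup C$ contribute $O(m)$; artificial edges $(v,\text{meta})$ and $(v,\text{first vertex})$ for each high/low component are charged to the incidences in $G$, totaling $O(m)$; and complete-graph pairs within $C$ give at most $\binom{|C|}{2}=O(m)$ since $|C|=O(m^{1/2})$. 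Hence $|E(H)|=O(m)$, and the randomized structure on $G^*\subseteq H$ costs $\widetilde{O}(m)$ space by Theorem~\ref{thm_randomized}; Theorem~\ref{thm_deterministic} gives linear space for the structure on $G_A$. Summing yields linear space overall.

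For pre-processing, initializing $F_A$ and its deterministic connectivity structure takes $\widetilde{O}(m)$; building the path graphs and the adjacency structures runs in $\widetilde{O}(m)$ time, since each edge incident to $V_A$ is inserted into at most $O(1)$ balanced search trees of size $O(n)$. The randomized structure on $G^*$ costs $O(m\lg^3 n+n\lg^4 n)=\widetilde{O}(m)$ by Theorem~\ref{thm_randomized}. The bottleneck is the insertion of $D[u,v]$ entries corresponding to the complete graph within $C$ constructed for each low component $Q$ of $G_A$: if $d(Q)$ denotes the degree of $Q$, this contributes $\binom{d(Q)}{2}=O(d(Q)^2)$ entries, each inserted in $\widetilde{O}(1)$ time. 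Using $d(Q)\le m^{1/4}$ for low components together with $\sum_Q d(Q)\le 2m$ over all components of $G_A$, I obtain
\[
\sum_{Q\ \text{low}} d(Q)^2 \;\le\; m^{1/4}\sum_{Q\ \text{low}} d(Q) \;\le\; m^{1/4}\cdot 2m \;=\; O(m^{5/4}),
\]
so the total pre-processing time is $\widetilde{O}(m^{5/4})$.

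The main subtlety to watch will be the dual role of the entries $D[u,v]$: for the space bound one must count \emph{distinct} pairs, which are capped by $\binom{|C|}{2}=O(m)$ regardless of how often multiplicities are incremented, whereas for pre-processing time one must count incidences \emph{with multiplicity}, which is where the $m^{5/4}$ term appears. I expect no real obstacle beyond carefully separating these two accountings and verifying that the artificial edges added per high/low component can indeed be charged to original incidences in $G$ so that no other sub-step exceeds $\widetilde{O}(m^{5/4})$.
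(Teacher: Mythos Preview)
Your proposal is correct and follows essentially the same line as the paper: charge every artificial edge (path-graph edges, first-vertex--to--$C$ edges, $C$--to--meta edges) to an original edge of $G$, bound the complete-graph pairs by $\binom{|C|}{2}=O(m)$ for space, and isolate the complete-graph construction as the unique super-linear pre-processing step. Your explicit estimate $\sum_{Q\text{ low}} d(Q)^2\le m^{1/4}\sum_Q d(Q)=O(m^{5/4})$ is equivalent to the paper's per-edge charging (each $C$--to--low-component edge spawns $O(m^{1/4})$ complete-graph increments); the only caveat is that Theorem~\ref{thm_randomized} as stated does not give the space bound for the randomized structure---you need the $O(m'+n'\lg^3 n')$ space bound from \cite{kapron2013dynamic} directly, with $n'=O(m^{3/4})$, to conclude linearity rather than merely $\widetilde{O}(m)$.
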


\begin{proof}
We analyze the space complexity and the pre-processing time one by one.
	
\begin{itemize}
	\item \textbf{Space complexity}
	
	The deterministic connectivity structure on $G_A$ occupies the space proportional to the size of $G_A$. As the edges in $G_A$ are all the original edges in $G$, it uses linear space.
	
	For the $D[u,v]$'s, the entries corresponding to the original edges in $G$ conform to the goal of linear-space usage. As to the artificial edges introduced by the path graphs, note that every original edge in $G$ between a vertex of $V_B$ and a vertex of $V_A$ is replaced by at most two edges within the path graphs. For the artificial edges connecting the first vertex of a path graph with some vertices of $C$, each one of them is brought because there is an original edge linking a vertex of $C$ with a low component in $G_A$. The edges between the vertices of $C$ and the meta-vertices can be accounted for in the same way. Until now every original edge in $G$ is attributed $O(1)$ times. Hence the space usage excluding the complete graphs constructed is linear. Considering the excluded edges constructed for the complete graphs, as only multiplicities are recorded, and $|C|=O(m^{1/2})$, $O(m)$ space suffices. In summary, $D[u,v]$'s size is linear.
	
	For the randomized connectivity structure on $G^*$, its space usage is $O(m^\prime+n^\prime\lg ^3 n^\prime)$, for a graph with $m^\prime$ edges, $n^\prime$ vertices. (The $\lg^3 n^\prime$ factor is due to $O(\lg n^\prime)$ \emph{cutset} data structures, which are represented by ET-trees maintaining $O(\lg ^2 n^\prime)$ values. See \cite{kapron2013dynamic} for the details.) The edge set of $G^*$ is a subset of $D[u,v]$'s, therefore its cardinality is $O(m)$. As to $n^\prime$, its value for $G^*$ is $O(m^{3/4})$. In summary, the space usage of the randomized connectivity structure is linear.
	
	For the other auxiliary data structures, e.g., the adjacency structures, every appearance of a vertex of $V_A$ in a sub-adjacency tree corresponds to an original edge in $G$ between $C$ and $V_A$. Thus it is linear. In summary, combining all of these, we can see that the total space complexity is linear.
	
	\item \textbf{Pre-processing time}
	
	As to the pre-processing time, the deterministic and randomized connectivity structures both need $\widetilde{O}(m)$ time. Similar to the space usage, excluding the complete graphs, every edge between $V_A$ and $V_B\cup C$ induces $O(1)$ artificial edges, and $\widetilde{O}(1)$ pre-processing time suffices. 
	
	For the complete graphs, since for every edge from a vertex of $C$ to a low component, $O(m^{1/4})$ edges are added in the complete graphs (the degree of a low component is $\le m^{1/4}$). Hence the pre-processing time for the complete graphs is $\widetilde{O}(m^{5/4})$. The total pre-processing time is dominated by the complete-graph constructions, and thus is $\widetilde{O}(m^{5/4})$.
\end{itemize}

\end{proof}

\section{Extension to Edge Updates}\label{sec_edge-updates}

The structure described in Sect.~\ref{sec_alg} can be extended to handle the edge updates. As mentioned in Sect.~\ref{sec_pre}, the data structure is constructed on a variant of $G$, which has $m=\Omega(n)$.

To insert an edge $(u,v)$, the simple trick of creating a new vertex $z$ adjacent to only $u$ and $v$, and then inserting $z$ would not work here \cite{chan2002dynamic,chan2011dynamic}. The point is that it might violate the type of $u$ (or $v$), i.e., which set ($A$, $B$ or $C$) it belongs to. On the same time, it also might violate the type of a component of $G_A$, i.e., whether the component is low or high. Instead, we adopt the following algorithm for the edge updates.

\begin{enumerate}[\bfseries Step 1]
	\item Delete $u$ and $v$ from $S$, if $u$ or $v$ is $\in S$.
	\item Update $E$.
	\item Change the type(s) of $u$ (and $v$). The type transition between $C$ and $B$ is the only case requiring the \emph{extra} processing. For the other cases, just mark the new type(s), e.g., if a former $A$ vertex $v$ becomes a $B$ type, just mark the type of $v$ as $B$. W.l.o.g., suppose the degree of a former $B$ vertex $u$ increases to $>m^{1/2}$, i.e.~a new $C$ vertex. Note that $\text{deg}_G(u)$ is just $\Theta(m^{1/2})$, as it is just above the threshold. We take two steps to deal with them: First, construct its adjacency structure, which can be accomplished in $\widetilde{O}(m^{1/2})$ time; second, insert $D[u,v]$ entries into its search tree. As $\text{deg}_G(u)$ is $\Theta(m^{1/2})$, the worst case is that it is adjacent to $\Theta(m^{1/2})$ distinct low components. $\widetilde{O}(m^{3/4})$ time suffices for calculating $D[u,v]$'s.
	\item Insert $u$ and $v$ back into $S$, if $u$ or $v$ is in $\in S$ before the edge update.
\end{enumerate}

The edge-update time is analyzed as follows.

\begin{lem}\label{lem_edge_update}
The data structure for the dynamic subgraph connectivity has the worst-case edge-update time $\widetilde{O}(m^{3/4})$, where $m$ is the number of edges in $G$, i.e.~$|E|$, at the time of the edge-update operation.
\end{lem}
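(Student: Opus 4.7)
The plan is to bound each of Steps 1--4 separately and sum. Steps 1 and 4 consist of at most two vertex removals and two vertex insertions respectively, each an instance of the vertex-update procedure; by Lemma \ref{lem_update_query_time} each costs $\widetilde{O}(m^{3/4})$ worst-case time. Step 2 flips one bit in the edge incidence lists of $u$ and $v$ in $\widetilde{O}(1)$ time. Because the edge operation changes the degrees of only $u$ and $v$, only these two vertices can potentially require relabeling in Step 3.

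For Step 3 I would dispatch on which transition occurs. The transitions $A\leftrightarrow B$ and any induced low/high flips of components in $G_A$ need no real work at Step 3: since Step 1 has already deleted $u$ and $v$ from $S$, every data structure in which $u$ or $v$ formerly appeared has been torn down under the old type, and Step 4 will rebuild everything under the new type via the vertex-insertion routine underlying Lemma \ref{lem_update_query_time}. So the only transition requiring explicit Step 3 work is $B\leftrightarrow C$, because only a $C$-vertex carries an adjacency structure and contributes $D[\cdot,\cdot]$ entries to meta-vertices and to the complete graphs over low components. I treat $B\to C$ in detail; the case $C\to B$ is the reverse and takes the same time.

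In the $B\to C$ promotion of $u$, the new degree is $\deg_G(u)=\Theta(m^{1/2})$ since $u$ has just crossed the threshold. Building the adjacency structure from scratch costs $\widetilde{O}(m^{1/2})$: walk over the $\Theta(m^{1/2})$ edges from $u$ to $V_A$ and, for each such edge, insert the endpoint into the appropriate sub-adjacency tree in $\widetilde{O}(1)$ time. Then create the $D[u,\cdot]$ entries: for each high component of $G_A$ adjacent to $u$, add one edge from $u$ to the corresponding meta-vertex, costing $\widetilde{O}(m^{1/2})$ in total; for each low component $Q$ adjacent to $u$, extend the complete graph on $C$-vertices adjacent to $Q$ by adding at most $O(m^{1/4})$ edges incident to $u$ (since $Q$ has degree $\le m^{1/4}$). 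Since $u$ is adjacent to at most $O(m^{1/2})$ low components, the total is $O(m^{1/2}\cdot m^{1/4})=O(m^{3/4})$ insertions into the $D$-table, each costing $\widetilde{O}(1)$. Note that these $D[u,\cdot]$ updates do not trigger edge insertions into $G^*$ at this stage, because $u\notin S$ throughout Step 3; the corresponding edges will be added to $G^*$ by Step 4 as part of its $\widetilde{O}(m^{3/4})$ vertex-insertion cost.

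The step I expect to dominate --- and hence the main obstacle --- is the complete-graph extension above: this is where the $m^{3/4}$ bound is tight, matching the bottleneck already present in the vertex-update analysis. Summing the four steps yields worst-case edge-update time $\widetilde{O}(m^{3/4})$, proving the lemma.
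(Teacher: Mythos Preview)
Your analysis of Steps~1--4 is essentially the paper's own argument and is correct as far as it goes: the constant number of vertex updates in Steps~1 and~4 are handled by Lemma~\ref{lem_update_query_time}, and your treatment of the $B\to C$ promotion in Step~3 matches the paper's, including the $O(m^{1/2})$ adjacency-structure build and the $O(m^{1/2}\cdot m^{1/4})$ bound for extending the complete graphs over low components.

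However, you have missed the second half of the argument entirely. The lemma asserts that the bound holds with $m$ equal to the \emph{current} number of edges, but the whole data structure --- the thresholds $m^{1/4}$ and $m^{1/2}$ defining $A$, $B$, $C$, and the low/high component cutoff --- is built with respect to some fixed value of $m$ chosen at preprocessing time. As edge insertions and deletions accumulate, that fixed parameter drifts away from $|E|$; after enough updates the partitions $A,B,C$ and the low/high classification no longer satisfy the size bounds (e.g., $|C|=O(m^{1/2})$, $|V(G^*)|=O(m^{3/4})$) on which every cost estimate, including those in Lemma~\ref{lem_update_query_time} that you invoke, depends. Your proof says nothing about this.

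The paper closes this gap with a standard de-amortized periodic rebuild: once the true edge count $m'$ first differs from the stored parameter $m$ by $\tfrac{1}{2}m^{1/2}$, a fresh copy is preprocessed in $\widetilde{O}(m^{5/4})$ time in the background, spread evenly over the next $\tfrac{1}{4}m^{1/2}$ updates, followed by a catch-up phase; this adds only $\widetilde{O}(m^{3/4})$ worst-case work per update and guarantees that the active structure's parameter is always within $m^{1/2}$ of the true edge count, so all the $\Theta(m^{\alpha})$ bounds remain valid. Without this piece, your argument establishes only that a single edge update costs $\widetilde{O}(m_0^{3/4})$ where $m_0$ is the edge count at the last full rebuild, not the statement of the lemma.
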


\begin{proof}
	
One can easily see that the update time is $\widetilde{O}(m^{3/4})$, as Step 3 only needs $\widetilde{O}(m^{3/4})$ time, and the number of vertex updates is constant. Now we take a look at the value of $m$.

As $m$ is used as a parameter for the division of vertices and the partition of components, $m$ should not deviate from the true number of edges $m^\prime$ currently in graph $G$ by too much. So when detecting $|m^\prime-m|$ first $\ge\tfrac{1}{2}m^{1/2}$, we make a copy $G^\prime$ of the current graph $G$, and partition the vertices based on the value of $m^\prime$. During the next $\tfrac{1}{4}m^{1/2}$ updates, we pre-process $G^\prime$ in $\widetilde{O}((m^\prime)^{5/4})$ time, which is $\widetilde{O}(m^{5/4})$ as $m^\prime=\Theta(m)$. The pre-processing is done in the background, and is evenly distributed over the $\tfrac{1}{4}m^{1/2}$ updates. After the pre-processing, we start the catching-up process of performing $\tfrac{1}{2}m^{1/2}$ updates during the next followed $\tfrac{1}{4}m^{1/2}$ updates of $G$. By the end of $\tfrac{1}{2}m^{1/2}$ updates of $G$, we replace $G$ with $G^\prime$.

For the past $\tfrac{1}{2}m^{1/2}$ updates (supported by $G$), each only spends $\widetilde{O}(m^{3/4})$ worst-case update time on the pre-processing and the catching-up of $G^\prime$. So the worst-case update time is still $\widetilde{O}(m^{3/4})$. The crucial point is that for all the past updates, using $m$ is just fine, as the difference between $m$ and the true value is $\le m^{1/2}$, i.e., asymptotically these two values are the same.

Now we replace $G$ with $G^\prime$. Suppose the current number of edges is $m^{\prime\prime}$. We face a similar circumstance of $|m^{\prime\prime}-m^{\prime}|\le \tfrac{1}{2}(m^\prime)^{1/2}$ to the former $G$. Just begin the process above again, if at some point detecting $|m^{\prime\prime}-m^{\prime}|$ first $\ge\tfrac{1}{2}(m^\prime)^{1/2}$.

\end{proof}

\section{Extension to $\widetilde{O}(m^{3/4+\epsilon})$ Update and $\widetilde{O}(m^{1/4-\epsilon})$ Query Time}\label{sec_extend_epsilon}
We discuss how to adjust the parameters to achieve the query time of  $\widetilde{O}(m^{1/4-\epsilon})$. As before, partition the vertices according to the degrees.

\begin{itemize}
	\item $C$: Vertices with $\text{deg}_G(v)>{m}^{1-b}$
	\item $B$: Vertices with ${m}^{1-a}<\text{deg}_G(v)\le m^{1-b}$
	\item $A$: Vertices with $\text{deg}_G(v)\le m^{1-a}$
\end{itemize}

A component of $G_A$ is high if its degree is $>m^{1-c}$; otherwise, it is low. As a low component can be of degree as high as $m^{1-c}$, the query time should be $\widetilde{O}(m^{1-c})$. Consequently, we can anticipate the update time of $\widetilde{O}(m^c)$. To achieve the desired update time, the following inequalities should be satisfied.

\begin{itemize}
	\item Update in $A$:
	\begin{enumerate}[(1)]
		\item Maintaining a spanning forest of $G_A$: $\sqrt{m}\cdot m^{1-a}\le m^c$.
		\item Maintaining the complete graphs w.r.t.~the low components: $m^{1-a}\cdot m^{2(1-c)}\le m^c$.
		\item Maintaining the edges between $C$ and the meta-vertices: $m^{1-a}\cdot m^b\le m^c$.
	\end{enumerate}
	
	\item Update of a $B$ vertex $v$:
	\begin{enumerate}[(4)]
		\item $v$ can be adjacent to $\le m^{1-b}$ low components of $G_A$: $m^{1-b}\cdot m^{1-c}\le m^c$.
	\end{enumerate}
	
	\item Update in $C$:
	\begin{enumerate}[(5)]
		\item The number of vertices in $G^*$ should be bounded: $m^a\le m^c$.
	\end{enumerate}
	
\end{itemize}

From $(2)$ and $(5)$, we have $3\le a+3c\le 4c$. It gives the reason why the data structure achieves the update time no better than $\widetilde{O}(m^{3/4})$, as $c$ should be $\ge 3/4$. Nevertheless, it also tells us $c$ could be $3/4+\epsilon$. By setting $b$ to be $1/2$ and $a$ to be $3/4$, all inequalities are again satisfied. Hence we have a new data structure with $\widetilde{O}(m^{3/4+\epsilon})$ update time and $\widetilde{O}(m^{1/4-\epsilon})$ query time.

\end{document}